
\documentclass{birkjour}
\usepackage{amsmath, amssymb, amsthm}
\usepackage[english]{babel}
\usepackage[T1]{fontenc}


\newcommand{\be}{\begin{equation}}
\newcommand{\ee}{\end{equation}}
\newcommand{\bd}{\begin{displaymath}}
\newcommand{\ed}{\end{displaymath}}
\newcommand{\ba}{\begin{eqnarray}}
\newcommand{\ea}{\end{eqnarray}}



\def\R{{I \!\! R}}

\def\c{\chi}

\def\v12{(v-w)}

\def\({\left(}
\def\){\right)}
\def\dy{\displaystyle}

\def\bgr#1\egr{{\allowdisplaybreaks\begin{gather}#1\end{gather}}}
\def\bma#1\ema{{\allowdisplaybreaks\begin{align}#1\end{align}}}

\def\oplem#1{\begin{lemma}\, {\rm #1}\, \it }
\def\cllem{\end{lemma}\rm \par }
\def\opthm#1{\begin{theorem}\, {\rm #1}\, \it }
\def\clthm{\end{theorem}\rm \par }




\def\R{\mathbb{R}}

\newcommand{\fer}[1]{(\ref{#1})}
\newcommand{\bq}{\begin{equation}}
\newcommand{\eq}{\end{equation}}
\def\bqa{\begin{eqnarray}}
\def\eqa{\end{eqnarray}}
\def\bd{\begin{displaymath}}
\def\ed{\end{displaymath}}

%
%
 \newtheorem{thm}{Theorem}[section]
 
 \newtheorem{lem}[thm]{Lemma}
 
 \theoremstyle{definition}
 
 \theoremstyle{remark}
 \newtheorem{rem}[thm]{Remark}
 
 \numberwithin{equation}{section}

\begin{document}

%
%
%
%
%
%
%
%
%

\title[Heat equation and convolution inequalities]
 {Heat equation and convolution inequalities}

\author[G. Toscani]{Giuseppe Toscani}

\address{Department  of Mathematics\\
University of Pavia\\
Via Ferrata 1\\
27100 Pavia\\
Italy}

\email{giuseppe.toscani@unipv.it}

\thanks{\emph{Date}: September 16, 2013.
This work has been completed within the activities of
the GNFM group of INDAM. The author acknowledges support by MIUR
project ``Optimal mass transportation, geometrical and functional
inequalities with applications''}

\subjclass{Primary 39B62; Secondary 94A17}

\keywords{Lyapunov functionals, heat equation, inequalities in sharp form}

\date{August 31, 2013}


\begin{abstract}
It is known that many classical inequalities linked to convolutions
can be obtained by looking at the monotonicity in time of
convolutions of powers of solutions to the heat equation, provided
that  both the exponents and the coefficients of diffusions are
suitably chosen and related. This idea can be applied to give an
alternative proof of the sharp form of the classical Young's
inequality and its converse, to Brascamp--Lieb type inequalities,
Babenko's inequality and Pr\'ekopa--Leindler inequality as well as
the Shannon's entropy power inequality. This note aims in presenting
new proofs of these results, in the spirit of the original arguments
introduced by Stam \cite{Sta} to prove the entropy power inequality.
\end{abstract}

\maketitle
\section{Introduction}
The purpose of this note is to present various results concerned
with the monotonicity in time of the convolution of powers of
solutions to the heat equation.  The main reason behind this
investigation is that many functional inequalities can be viewed as
the consequence of the tendency of various Lyapunov functionals
defined in terms of powers of the solution to the heat equation to
reach their extremal values as time tends to infinity. The discovery
of a Lyapunov functional which allows to prove Young inequality and
its converse \cite{BB}, is only one of the possible application of
this idea (cf. also \cite{Tos1,Tos2, Tos3}  for a connection of
these results with information theory). While the inequalities are
not new, and some of the results we present have been obtained
before, what is new is the approach to the problem, which takes into
account the information-theoretical meaning of inequalities for
convolutions, and consequently allows to obtain clean and relatively
simple new proofs.

The prototype of these monotonous in time convolutions is as follows.
Let $n$ be an integer, and let $\alpha_j$, $j= 1,\dots, n$, be positive real
numbers such that
 \be\label{sum}
\quad \sum_{j=1}^n \alpha_j = n-1.
 \ee
 Let
$f_j(x)$, $j= 1,\dots, n$,  be non-negative functions on $\R^d$, $
d\ge 1$, such that $f_j \in L^{p_j}(\R^d)$. For any given $j$, $j=
1,\dots, n$, we denote by $u_j(x,t)$ the solution to the heat
equation \fer{heat} with the diffusion coefficients $\kappa_j$
 \[
\frac{\partial u_j(x,t)}{\partial t} =  \kappa_j \Delta u_j(x,t),
 \]
such that
 \[
\lim_{t \to 0^+} u_j(x,t) = f_j(x).
 \]
 We consider the $n$-th convolution
 \be\label{nth}
w(x,t)  =  u_1^{\alpha_1}*u_2^{\alpha_2}*\cdots * u_n^{\alpha_n}(x,t).
 \ee
Then, a natural question arises. Can we fix the diffusion
coefficients in the heat equation in such a way that $w(x,t)$
behaves monotonically in time?  Note that the choice of condition
\fer{sum} is forced by the fact that we want that the monotonicity
of $w(x,t)$, $t >0$ has to hold at least if $u_j(x,t)$ is the
fundamental solution to the heat equation, $j =1,2, \dots, n$. In
this case, in fact, computations are explicit, and, provided
condition \fer{sum} is satisfied,  $w(x,t)$ is increasing in time
independently of the choice of the diffusion coefficients (cf.
Section \ref{he}). In the general case, however, the monotonicity in
time of the $n$-th convolution can be proven under more restrictive
assumptions both on the numbers $\alpha_j$, and only for a unique
choice of the diffusion coefficients $\kappa_j$ (cf. Lemma
\ref{le-young}).

The interest in the monotonicity of the convolution of powers of
solutions to the heat equation is linked to its consequences.
Indeed, the discovery of the monotonicity of $w(x,t)$ for a special
choice of the diffusion coefficients translates immediately to the
proof of  an inequality for convolutions in sharp form. Let $n$ be
an integer, and let $p_j$, $j= 1,\dots, n$, be real numbers such
that $1 \le p_j \le +\infty$ and $\sum_{j=1}^n p_j^{-1} = n-1$. Let
$f_j(x)$, $j= 1,\dots, n$,  be functions on $\R^d$, $ d\ge 1$, such
that $f_j \in L^{p_j}(\R^d)$. In Theorem \ref{th-young} we will show
that the monotonicity of $w(x,t)$ implies the following inequality
for convolutions:
 \be\label{you}
\sup_x | f_1*f_2*\cdots * f_n| \le \prod_{j=1}^n C_{p_j}^d \|
f_j\|_{p_j}.
 \ee
In \fer{you}, the constant $C_p$ which defines the sharp constant is
given by
 \be\label{c+}
 C_p^2 = \frac{p^{1/p}}{p'^{1/p'}},
  \ee
where primes always denote dual exponents, $1/p + 1/p' = 1$. Also,
the expression of the best constant in \fer{you}, in the case in
which the functions $f_j$ are probability density functions, is
obtained by assuming that the functions $f_j$ are suitable Gaussian
densities \cite{Li}. This expression naturally appears in this
monotonicity approach by considering that for large times the
solution to the heat equation behaves as the self-similar Gaussian
profile.

Alternatively, \fer{you} is equivalent to
 \be\label{you2}
 \left| \int f_1(x_1)f_2(x_1-x_2)\cdots f_n(x_{n-1})\, dx_1dx_2\cdots dx_{n-1}
 \right| \le \prod_{j=1}^n C_{p_j}^d \|
f_j\|_{p_j},
 \ee
which is a particular case of the general inequalities obtained by
Brascamp and Lieb \cite{BL},  which are nowadays known as the
Brascamp--Lieb inequalities.

Note that inequality \fer{you} is closely related to the
monotonicity property of the functional given by $L^\infty$-norm of
the $n$-th convolution $w(x,t)$. Naturally one could ask if a
similar property holds for the $L^r$-norm of $w(x,t)$, where $r \ge
0$. Also in this case, the monotonicity in time  can be proven under
suitable assumptions both on the numbers $\alpha_j$, and only for a
unique choice of the diffusion coefficients $\kappa_j$. The study of
the monotonicity in time of $\|w(t)\|_r$ is connected with the
classical Young's inequality  in sharp form ($r>1$), or with its
reverse form ($r<1$).

Last, the limiting cases $r \to 1$ and $r \to 0$ lead to the
monotonicity in time of Shannon's entropy   and of the Renyi entropy
of order $0$ \cite{CTh}.  The monotonicity here leads to the entropy
power inequality of Shannon \cite{Sha}, and to the
Pr\'ekopa--Leindler inequality \cite{Lei,Pr1,Pr2}, respectively.

Therefore, all these well-known functional inequalities can be seen
into a unified framework, as consequences of the monotonicity in
time of the $n$-convolution of powers of solutions to the heat
equation.

As noticed in \cite{Tos3}, the heat equation started to be used as a
powerful instrument to obtain mathematical inequalities in sharp
form in the years between the late fifties to mid sixties. To our
knowledge, the first application of this idea can be found in two
papers  by Linnik \cite{Lin} and Stam \cite{Sta} (cf. also Blachman
\cite{Bla}), published in the same year and concerned with two
apparently disconnected arguments. Stam \cite{Sta} was motivated by
the finding of a rigorous proof of Shannon's entropy power
inequality \cite{Sha}, while Linnik \cite{Lin} used the information
measures of Shannon and Fisher in a proof of the central limit
theorem of probability theory. Also, in the same years, the heat equation has been used in the context of kinetic theory of rarefied gases by McKean \cite{McK} to investigate that large-time behaviour of Kac caricature of a Maxwell gas. There, various monotonicity properties of the derivatives of Shannon's entropy along the solution to the heat equation have been enlightened.

The huge potentialities of the use of the heat equation to prove
inequalities have been rediscovered in more recent times by Carlen,
Lieb and Loss \cite{Car}, that first introduced a Lyapunov
functional of solutions to the heat equation which allows to prove
Young's inequality and its converse for functions of one variable.
Later on, Bennett Carbery Christ and Tao \cite{B1} were able to
extend the result in \cite{Car} to general functions. Other very
closely-related works can be found in papers of Bennett and Bez
\cite{BB},  Borell \cite{Bor}, Barthe and Cordero-Erausquin
\cite{BC} and Barthe-Huet \cite{BH}. In particular, Young's
inequality and its converse have been proven by Bennett and Bez
\cite{BB} by showing that a suitable functional of the convolution
of powers to the solution to the heat the heat equation exhibits
monotonicity properties.

As often happens, however, the seminal ideas of Stam \cite{Bla, Sta}
remained confined within the framework of information theory, where, however,  functional inequalities gained a lot of interest, in reason of their connections with properties of Shannon's and Renyi's entropies \cite{DCT}. A notable exception to this confinement
is a recent paper by Gardner \cite{Gar},
that clarifies the relationship between the Brunn-Minkowski
inequality and other inequalities in geometry and analysis. In
\cite{Gar}, clear connections between the entropy power inequality
of information theory and Young's inequality and others are
described in details, together with an exhaustive list of
references.

As far as the classical Young's inequality is concerned, the
original  proof of the sharp form is due to Beckner \cite{Bec} and
Brascamp and Lieb \cite{BL}. In \cite{BL} Brascamp and Lieb also
proved the sharp form of Young inequality also in the so-called
reverse case.
A different  proof of this sharp reverse Young inequality was subsequently
done by Barthe \cite{Ba}. In their recent paper,  Young's inequality
has been seen in a different light by Bennett and Bez \cite{BB} (cf.
also \cite{B2,B1,Car}). In this paper, Young's inequality is
derived by looking at the monotonicity properties of a suitable functional of the
convolution of powers to the solution to the heat the heat equation.
In this respect, the arguments of \cite{BB} are close to the present
ones.

The connections of the sharp form of Young's inequality with the
Pr\'ekopa--Leindler inequality  has been enlightened by Brascamp and
Lieb \cite{BL}.  Then, the connection of Young's inequality with
Shannon's entropy power inequality has been noticed by Lieb
\cite{Lieb}.


\section{Heat equation, Lyapunov functionals and dilation
invariance}\label{he}

We begin by recalling some properties of the solution to the heat
equation in $\R^d$, $d \ge 1$
 \be\label{heat}
\frac{\partial u(x,t)}{\partial t} =  \kappa \Delta u(x,t),
 \ee
where $\kappa>0 $ is the (constant) diffusion coefficient. In the
rest of the paper, for the sake of simplicity we will assume that
the initial datum is a non-negative integrable function $f(x)$, that
is
 \be\label{den}
\int_{\R^d} f(x) \, dx = \mu < +\infty.
 \ee
This assumption will not affect the generality of the results that
follow.  The solution to equation \fer{heat} is given by the
function $u(x,t)= f*M_{2\kappa t}(x)$, convolution of the initial
datum with the fundamental solution $M_{2\kappa t}$, where $M_\sigma(x)$, for $\sigma >0$,
denotes the Gaussian density in $\R^n$ of mean $0$ and variance
$d\sigma$
 \be\label{max}
M_\sigma(x) = \frac 1{(2\pi
\sigma)^{d/2}}\exp\left(-\frac{|x|^2}{2\sigma}\right).
 \ee
For large times,  the solution to the heat equation approaches the
fundamental solution. This large-time behaviour can be better
specified by saying that the solution to the heat equation
\fer{heat} satisfies a property which can be defined as the
\emph{central limit property}. If
\begin{equation}\label{FPscal}
 U(x,t) = \left(\sqrt{1+2t}\right)^d\, u(x\, \sqrt{1+2t}, t).
\end{equation}
$U(x,t)$ tends towards a limit function as time goes to infinity,
and this limit function is a Gaussian function
 \be\label{limi}
\lim_{t\to \infty} U(x,t) = M_\kappa(x) \, \int_{\R^n} f(x) \, dx =
\mu M_\kappa(x).
 \ee
This property can be achieved easily by resorting to Fourier
transform, or by exploiting the relationship between the heat
equation and the Fokker--Planck equation \cite{CT} (cf. also
\cite{BBDE} for recent results and references). We note that the
passage $u(x,t) \to U(x,t)$ defined by \fer{FPscal} is mass
preserving, that is
 \be\label{mp}
\int_{\R^d} U(x,t) \, dx =   \int_{\R^d} u(x,t) \, dx.
 \ee
An important remark concerns the necessity to introduce condition
\fer{sum} in our analysis. Since the fundamental solution is a
Gaussian probability density, it is closed under the operation of
convolution \cite{LR}, namely
 \[
 M_{\sigma_1}*M_{\sigma_2}(x) = M_{\sigma_1 + \sigma_2}(x).
 \]
Hence, if we consider at  time $t>0$ the convolution of $n$ powers
of the fundamental solutions of heat equations with diffusion
coefficients $\kappa_j$, $j=1,2,\dots, n$, we obtain
 \[
 M_{2\kappa_1t}^{\alpha_1}*M_{2\kappa_2t}^{\alpha_2}*\cdots *M_{2\kappa_nt}^{\alpha_n} =
 \]
 \[
 \prod_{j=1}^n \left( 4\pi \kappa_j t \right)^{-\alpha_j/2}\left( 4\pi \frac{\kappa_j}{\alpha_j} t \right)^{1/2}M_{2t\kappa_1/\alpha_1 }*M_{2t\kappa_2/\alpha_2}*\cdots *M_{2t\kappa_n/\alpha_n} =
 \]
 \[
 \prod_{j=1}^n \left( 4\pi \kappa_j t \right)^{-\alpha_j/2}\left( 4\pi \frac{\kappa_j}{\alpha_j} t \right)^{1/2}M_{2\Sigma t},
 \]
 where
 \[
 \Sigma = \sum_{j=1}^n \frac{\kappa_j}{\alpha_j}.
 \]
In the expression above the time-dependent quantity in front of the exponential is given by
 \[
\phi(t) = t^{ - \frac 12 \sum_{j=1}^n \alpha_j + \frac 12 (n -1)}.
 \]
Therefore, if the exponents $\alpha_j$ satisfy condition \fer{sum},
independently of the values of the diffusion coefficients
$\kappa_j$, $\phi(t) = 1$, and
 \be\label{co2}
 M_{2\kappa_1t}^{\alpha_1}*M_{2\kappa_2t}^{\alpha_2}*\cdots *M_{2\kappa_nt}^{\alpha_n} =
 \Sigma_1 \exp\left\{- |x|^2/ 4\Sigma t \right\},
  \ee
 where $\Sigma_1$ denotes the constant
  \[
 \Sigma_1=  \left( \frac{\kappa_j}{\alpha_j} \right)^{n/2}\Sigma^{-1/2} \prod_{j=1}^n \left(\kappa_j \right)^{-\alpha_j/2}.
 \]
 Consequently, independently of the values of the diffusion coefficients $\kappa_j$, if the exponents $\alpha_j$ satisfy condition \fer{sum}, for every $x \in \R^d$
 \be\label{d+}
 \frac d{dt} M_{2\kappa_1t}^{\alpha_1}*M_{2\kappa_2t}^{\alpha_2}*\cdots *M_{2\kappa_nt}^{\alpha_n} \ge 0.
 \ee
This property is obviously restricted to a set of positive constants $\alpha_j$ satisfying \fer{sum}.

A second argument is to use the evolution equation for a power of
the solution to the heat equation. If $\alpha >0$ is a positive
constant, and $u(x,t)$ solves \fer{heat}, then $u^\alpha(x,t)$
solves
 \be\label{heat1} \frac{\partial
u^\alpha(x,t)}{\partial t} =  \kappa \left[ \Delta u^\alpha(x,t) +
\alpha(1-\alpha)u^\alpha(x,t)|\nabla \log u(x,t)|^2 \right].
 \ee
Equation \fer{heat1} is particularly adapted to work with
convolutions of powers. Note that equation \fer{heat1} connects in a
natural way dual exponents. In fact, if $\alpha = 1/p$, with $p
>1$, equation \fer{heat1} takes the form
 \[ \frac{\partial
u^{1/p}(x,t)}{\partial t} =  \kappa \left[ \Delta u^{1/p}(x,t) +
\frac 1{pp'}u^{1/p}(x,t)|\nabla \log u(x,t)|^2 \right].
 \]
Our last ingredient is to consider the evolution in time of Lyapunov
functionals of solutions to the heat equation which are
\emph{dilation invariant}, that is invariant with respect to the
scaling
 \be\label{scal}
 f(x) \to f_a(x) = {a^d} f\left( {a} x \right), \quad a >0,
 \ee
In reason of \fer{limi}, this property allows to reckon immediately
the (bounded) limit value of the underlying functional, as time goes
to infinity.

One simple example will clarify why dilation invariance is a key ingredient of our strategy.
Given a solution to the heat equation \fer{heat} let us consider its (finite) Shannon's entropy
 \[
 H(u(t)) = -\int_{\R^d} u(x,t) \log u(x,t) \, dx
 \]
A simple computation shows that the time derivative of $H(u(t))$ is
non-negative \cite{CTh}, and it converges to infinity as time goes
to infinity. Indeed, this happens because Shannon's entropy is not
scaling invariant
 \be\label{h-scale}
H(u_a) = H(u) - d \log a.
 \ee
Clearly, there are various ways to obtain the
scaling invariance of $H$ by adding or multiplying it by suitable
quantities. We resort here to the second moment of $u$. It is easily
checked that the second moment of a probability density function scales
according to
 \be\label{sec}
E(u_a) = \int_{\R^d} |x|^2 u_a(x) \, dx = \frac 1{a^2} E(u).
 \ee
Hence, if the probability density has bounded second moment, a scaling
invariant functional is obtained by coupling Shannon's entropy of $u$ with
the logarithm of the second moment of $u$
 \be\label{ent2}
\Gamma(t) = H(u(t)) - \frac d2 \log E(u(t)).
 \ee
Explicit computations then show that the functional $\Gamma(t)$ is
monotone increasing, but, by virtue of the \emph{central limit
property}, it will converge to a bounded value \cite{Tos3}
 \[
 \Gamma(u(t)) \le \Gamma({M_1}) =  \frac d2
\log \frac{2\pi e}d.
 \]
The rest of the paper will be devoted to the proof of various
inequalities for convolutions in sharp form. For the sake of
simplicity, we will present most of the proofs in dimension $d=1$. The
corresponding higher-dimensional inequalities can be deduced as well
by making use of standard properties of the Gaussian function.

\section{The monotonicity of convolutions}\label{se-you}

Let $n$ be an integer, and let $p_j$, $j= 1,\dots, n$, be real
numbers such that
 \be\label{coef+}
1 \le p_j \le +\infty; \quad \sum_{j=1}^n \frac 1{p_j} = n-1.
 \ee
 Let
$f_j(x)$, $j= 1,\dots, n$,  be non-negative functions on $\R^d$, $
d\ge 1$, such that $f_j \in L^{p_j}(\R^d)$. For any given $j$, $j=
1,\dots, n$, we denote by $u_j(x,t)$ the solution to the heat
equation \fer{heat} with the diffusion coefficients $\kappa_j$
 \be\label{hu}
\frac{\partial u_j(x,t)}{\partial t} =  \kappa_j \Delta u_j(x,t),
 \ee
such that
 \be\label{id}
\lim_{t \to 0^+} u_j(x,t) = f_j(x).
 \ee
The following Lemma shows that there is  a (unique)  choice of the
diffusion coefficients in the heat equation such that $w(x,t)$
behaves monotonically in time.

\begin{lem}\label{le-young}  Let $w(x,t)$ be the $n$-th convolution
 \be\label{ke1}
w(x,t) = u_1^{1/p_1}*u_2^{1/p_2}*\cdots * u_n^{1/p_n}(x,t)
 \ee
where the functions $u_j(x,t)$, $j=1,2,\dots,n$, are solutions to
the heat equation corresponding to the initial values $0 \le
f_j(x)\in L^1(\R^d)$. Then, if for each $j$ the exponents $p_j$
satisfy conditions \fer{coef+} and the diffusion coefficients are
given by $\kappa_j = (p_jp'_j)^{-1}$, $w(x,t)$ is monotonically
increasing in time from
 \[
w(x, t=0) = f_1^{1/p_1}*f_2^{1/p_2}*\cdots * f_n^{1/p_n}(x).
 \]
Moreover, $w(x,t)$ remains constant in time if and only if $f_j(x)$,
$j=1,2,\dots, n$, is a multiple of a Gaussian density of variance
$d\kappa_j$.
\end{lem}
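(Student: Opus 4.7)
The plan is to express $\partial_t w(x)$ as an integral of a manifestly non-negative integrand. Setting $v_j:=u_j^{1/p_j}$ and applying equation \fer{heat1} with $\alpha=1/p_j$ (so that $\alpha(1-\alpha)=1/(p_jp'_j)=\kappa_j$), together with the identity $u_j^{1/p_j}|\nabla\log u_j|^2=p_j^2|\nabla v_j|^2/v_j$, one obtains
\[
\partial_t v_j=\kappa_j\Delta v_j+\kappa_j(p_j-1)\,\frac{|\nabla v_j|^2}{v_j}.
\]
Differentiating the $n$-fold convolution factor by factor then yields
\[
\partial_t w(x)=\int\sum_j\Bigl[\kappa_j\Delta v_j(z_j)+\kappa_j(p_j-1)\frac{|\nabla v_j(z_j)|^2}{v_j(z_j)}\Bigr]\prod_{\ell\neq j}v_\ell(z_\ell)\,dz,
\]
where the integration runs over the affine subspace $\{\sum_\ell z_\ell=x\}$ in $\R^{dn}$.

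The central step is an integration by parts on the $\Delta v_j$ terms. Since the integrand is translation invariant subject to $\sum_\ell z_\ell=x$, for any $j\neq k$ one verifies
\[
\int\Delta v_j(z_j)\prod_{\ell\neq j}v_\ell(z_\ell)\,dz=\int\nabla v_j(z_j)\cdot\nabla v_k(z_k)\prod_{\ell\neq j,k}v_\ell(z_\ell)\,dz,
\]
so each $\Delta v_j$ term may be replaced by any convex combination $\sum_{k\neq j}c_{jk}\nabla v_j\cdot\nabla v_k$ with $\sum_{k\neq j}c_{jk}=1$. The critical choice is $c_{jk}:=\kappa_k p_jp_k=p_j/p'_k$, which satisfies $\sum_{k\neq j}c_{jk}=p_j\sum_{k\neq j}(1/p'_k)=p_j(1-1/p'_j)=1$; the normalization uses the key identity $\sum_\ell 1/p'_\ell=n-\sum_\ell 1/p_\ell=1$ coming from \fer{coef+}. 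With this substitution, factoring $\prod_\ell v_\ell(z_\ell)$ out of the integrand and using $\nabla v_j/v_j=(1/p_j)\nabla\log u_j$ together with the simplification $\kappa_j(p_j-1)/p_j^2=\kappa_j^2$ (via $p_j-1=p_j/p'_j$), the bracket collapses to a perfect square, giving
\[
\partial_t w(x)=\int\prod_\ell v_\ell(z_\ell)\Bigl|\sum_j\kappa_j\nabla\log u_j(z_j)\Bigr|^2 dz\ge 0.
\]

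For the equality case, the integrand must vanish almost everywhere on the positivity set, so $\sum_j\kappa_j\nabla\log u_j(z_j)\equiv 0$ subject to $\sum_\ell z_\ell=x$. Differentiating this constraint with respect to independent $z_j$'s forces $\kappa_j\nabla^2\log u_j(z_j)$ to be a constant matrix independent of both $z_j$ and $j$; integrability of $u_j$ then makes the matrix negative definite, so each $u_j$ is a Gaussian whose covariance scales with $\kappa_j$, which translates back to the stated characterization of the $f_j$. The main obstacle is identifying the right convex combination $c_{jk}$ that distributes the $\Delta w$ mass so as to produce the perfect square; this collapse depends essentially on both the condition \fer{coef+} and the specific choice $\kappa_j=(p_jp'_j)^{-1}$, and fails for any other assignment of the diffusion coefficients.
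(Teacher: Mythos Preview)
Your proof is correct and follows essentially the paper's strategy: differentiate $w$ using \fer{heat1}, redistribute the Laplacian contributions across the convolution factors by the identity $\int \Delta v_j\prod_{\ell\neq j}v_\ell=\int \nabla v_j\cdot\nabla v_k\prod_{\ell\neq j,k}v_\ell$, and recognize the resulting integrand as the perfect square $\bigl|\sum_j\kappa_j\nabla\log u_j\bigr|^2$, with the same equality analysis. The only difference is cosmetic bookkeeping---the paper first collects $(\sum_j\kappa_j)\,\partial_x^2 w$ and then splits it via a single family $a_{i,j}=(p'_ip'_j)^{-1}/\sum(p'_ip'_j)^{-1}$, whereas you split each $\kappa_j\Delta v_j$ separately via the asymmetric weights $c_{jk}=p_j/p'_k$---but both parametrizations land on the identical square.
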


\begin{proof}
For the sake of simplicity, we will prove the Lemma for $d=1$. As
the proof shows, however, analogous computations can be done in
higher dimension.

Since $\sum_{j=1}^n p_j^{-1} = n-1$, H\"older inequality implies
that
 \[
\left| \int f_1(x_1)^{1/p_1}\cdots f_n(x_{n-1})^{1/p_n}\,
dx_1dx_2\cdots dx_{n-1}
 \right| \le \prod_{j=1}^n  \left( \int_{\R} | f_j(x)|\, dx
 \right)^{1/p_j}.
 \]
Hence
 \be\label{ho}
f_1^{1/p_1}*f_2^{1/p_2}*\cdots * f_n^{1/p_n}(x) \le \prod_{j=1}^n
\left( \int_{\R} | f_j(x)|\, dx \right)^{1/p_j},
 \ee
and, since the right-hand side of \fer{ho} depends only on the
$L^1$-norms of the functions, which are preserved by the heat
equation, the function $w(x,t)$ is bounded for all subsequent times
$t>0$. Also, using basic considerations on the heat equation, it is
sufficient to prove the increasing property of $w(t)$ for very
smooth initial data $f_j$, $j=1,2,\dots, n$, with fast decay at
infinity. In order not to worry about derivatives of logarithms,
which will often appear in the proof, we may also impose that
$|\frac d{dx}\log f_j(x)| \le C(1 + |x|^2)$ for some positive
constant $C$. The general case will follow by density \cite{LT}.

For a given $x \in \R$, let us evaluate the time derivative of the
$n$-th convolution $w(x,t)$. We obtain

\be\label{ott}
 \frac{\partial w(x,t)}{\partial t}  = \left(\sum_{j=1}^n
\kappa_j \right) \frac{\partial^2 w(x,t) }{\partial x^2 } +
\sum_{j=1}^n \frac{\kappa_j}{p_jp'_j} R_j(x,t),
 \ee
where, for $j = 1,2, \dots, n$
 \be\label{rj}
R_j(x) = \int u_1(x-x_1)^{1/p_1}\cdots u_n(x_{n-1})^{1/p_n} \left|
\frac{\partial \log u_j}{\partial x}(x_{j-1}- x_j) \right|^2\,
dx_1\cdots dx_{n-1}
 \ee
Indeed,
 \[
\frac{\partial w}{\partial t} = \frac{\partial u_1^{1/p_1}}{\partial
t}*u_2^{1/p_2}*\cdots * u_n^{1/p_n} + u_1^{1/p_1}*\frac{\partial
u_2^{1/p_2}}{\partial t}*\cdots * u_n^{1/p_n} + \dots
\]
\[
 +  u_1^{1/p_1}* u_2^{1/p_2}*\cdots * \frac{\partial
u_n^{1/p_n}}{\partial t},
 \]
and the time derivative of each term on the right-hand side can be
evaluated by considering that the functions $u_j(x,t)$,
$j=1,2,\dots,n$ satisfy the heat equation \fer{hu} (with diffusion
coefficients $\kappa_j$, $j=1,2,\dots,n$). Hence
 \[
\frac{\partial u_1^{1/p_1}}{\partial t}*u_2^{1/p_2}*\cdots *
u_n^{1/p_n} = \kappa_1 \frac{\partial^2 u_1^{1/p_1} }{\partial x^2
}*u_2^{1/p_2}*\cdots * u_n^{1/p_n} +
\]
\[
\frac {\kappa_1}{p_1p'_1}\left(\left| \frac{\partial\log
u_1}{\partial x}\right|^2u_1^{1/p_1}\right)*u_2^{1/p_2}*\cdots *
u_n^{1/p_n}=
\]
 \be\label{112}
\kappa_1 \frac{\partial^2 r }{\partial x^2 }+ \frac
{\kappa_1}{p_1p'_1}\left(\left| \frac{\partial\log u_1}{\partial
x}\right|^2u_1^{1/p_1}\right)*u_2^{1/p_2}*\cdots * u_n^{1/p_n}.
 \ee
An analogous formula holds for the other indexes $j \ge 2$. Note
that in \fer{112}  we used the convolution property
 \be\label{con1}
\frac{\partial^2}{\partial x^2} f*g(x) = \int f''(x-y)g(y)\, dy =
\int f'(x-y)g'(y)\, dy = \int f(x-y)g''(y)\, dy.
 \ee
By property \fer{con1}, it holds that, for each pair of indexes
$(i,j)$ with  $i,j = 1,2, \dots, n$
 \[
\left(f_1*f_2*\cdots*f_n \right)'' =
 \]
 \[
\int f_1(x-x_1)\dots f_n(x_{n-1})(\log f(x_{i-1}-x_i))'(\log
f(x_{j-1}-x_j))'\, dx_1\dots dx_{n-1}.
 \]
Hence, if we take a set of positive constants  $a_{i,j}$'s, $i,j =
1,2, \dots, n$, such that $\sum_{i\not=j}a_{i,j}= 1$, we can express
the second derivative of a convolution as
 \[
\left(f_1*f_2*\cdots*f_n \right)'' = \sum_{i\not=j}a_{i,j}\int
f_1(x-x_1)\dots f_n(x_{n-1})\cdot
 \]
 \[
 \cdot(\log f(x_{i-1}-x_i))'(\log f(x_{j-1}-x_j))'\, dx_1\dots dx_{n-1}.
 \]
This shows that, for any set of positive values $a_{i,j}$ such that
$\sum_{i\not=j}a_{i,j}= 1$, one has
 \[
\frac{\partial^2 w }{\partial x^2 } =
\sum_{i\not=j}\frac{a_{i,j}}{p_ip_j}\int u_1^{1/p_1}(x-x_1)\dots
u_n^{1/p_n}(x_{n-1})\cdot
 \]
 \be\label{fin1}
\cdot(\log u(x_{i-1}-x_i))'(\log u(x_{j-1}-x_j))'\, dx_1\dots
dx_{n-1}.
 \ee
Finally, by setting, for $j=1,2,\dots,n$
 \be\label{log1}
L_j = \log u_j (x_{j-1}-x_j))',
 \ee
we can rewrite \fer{ott} in the following way:
 \[
 \frac{\partial w(x,t)}{\partial t}  = \int  u_1^{1/p_1}(x-x_1)\dots
u_n^{1/p_n}(x_{n-1})\cdot
 \]
 \be\label{ris}
\left( \sum_{j=1}^n \frac{\kappa_j}{p_jp'_j}L_j^2 +  \sum_{l=1}^n
\kappa_l\sum_{i\not=j}\frac{a_{i,j}}{p_ip_j}L_iL_j\right) dx_1\dots
dx_{n-1}.
 \ee
The sign of the time derivative of $w(x,t)$ depends on the quantity
 \be\label{l1}
\mathfrak{L}(u_1, \cdots u_n) = \sum_{j=1}^n
\frac{\kappa_j}{p_jp'_j}L_j^2 + \sum_{l=1}^n
\kappa_l\sum_{i\not=j}\frac{a_{i,j}}{p_ip_j}L_iL_j.
 \ee
Let us set the coefficient of diffusion $\kappa_j = (p_jp'_j)^{-1}
$, and define $Q_j = L_j/p_j$, for $j= 1,2,\dots n$. Then
 \be\label{l2}
\mathfrak{L} = \sum_{j=1}^n \left(\frac{1}{p'_j}\right)^2 Q_j^2 +
\sum_{l=1}^n \frac 1{p_lp'_l}\sum_{i\not=j}a_{i,j}Q_iQ_j.
 \ee
Now, recall that
 \[
\sum_{j=1}^n \frac{1}{p_j} = n-1
 \]
implies that, for all  $j= 1,2,\dots n$
 \[
\frac 1{p_j} = \sum_{i\not=j} \frac{1}{p'_i}.
 \]
 Consequently
 \[
\sum_{l=1}^n \frac 1{p_lp'_l} = \sum_{i\not=j}\frac1{p'_ip'_j}.
 \]
Therefore
  \be\label{l3}
\mathfrak{L} = \sum_{j=1}^n \left(\frac{1}{p'_j}\right)^2 Q_j^2 +
\sum_{i\not=j}\frac1{p'_ip'_j} \sum_{i\not=j}a_{i,j}Q_iQ_j.
 \ee
If we now choose, for $i\not=j$
 \be\label{aij}
a_{i,j} = \frac{(p'_ip'_j)^{-1}}{\sum_{i\not=j}(p'_ip'_j)^{-1}},
 \ee
which is such that $\sum_{i\not=j}a_{i,j} = 1$, we end up with
 \be\label{e3}
\mathfrak{L} = \sum_{j=1}^n \left(\frac{1}{p'_j}\right)^2 Q_j^2 +
\sum_{i\not=j}\frac1{p'_ip'_j}Q_iQ_j = \left( \sum_{j=1}^n
\frac{Q_j}{p'_j} \right)^2 \ge 0.
 \ee
The previous argument shows that the time derivative of $w(x,t)$ can
be made non-negative  by suitably choosing the diffusion
coefficients $\kappa_j$,  $j= 1,2,\dots n$.

Recalling the definition of $Q_j$ (respectively $L_j$), equality to
zero in \fer{e3} holds if and only if
 \be\label{q5}
\frac{1}{p_1p'_1}(\log{u_1(x-x_1)})' + \sum_{j=2}^{n-1}
\frac{1}{p_jp'_j}(\log{u_j(x_{j-1}-x_j)})' +
\frac{1}{p_np'_n}(\log{u_n(x_{n-1})})' =0.
 \ee
As each variable $x_i$  appears as argument of a pair of functions
only, it holds that, for every $i= 1,2, \dots, n-1$
 \be\label{q6}
 \frac{1}{p_{j}p'_{j}}\frac {\partial}{\partial x_{j}} (\log{u_{j}(x_{j-1}-x_{j})})' +
 \frac{1}{p_{j+1}p'_{j+1}} \frac {\partial}{\partial x_{j}} (\log{u_{j+1}(x_{j}-x_{j+1})})' = 0.
 \ee
In \fer{q6} we set $x_0 = x$ and $x_{n} = 0$. On the other hand,
since
 \[
 (\log{u_{j}(x_{j-1}-x_{j})})' = \frac {\partial}{\partial x_{j-1}} \log{u_{j}(x_{j-1}-x_{j})} = -\frac {\partial}{\partial x_{j}} \log{u_{j}(x_{j-1}-x_{j})},
 \]
equation \fer{q6} coincides with
 \be\label{q7}
 \frac{1}{p_{j}p'_{j}}\frac {\partial^2}{\partial x_{j-1}^2} \log{u_{j}(x_{j-1}-x_{j})}=
 \frac{1}{p_{j+1}p'_{j+1}} \frac {\partial^2}{\partial x_{j}^2} \log{u_{j+1}(x_{j}-x_{j+1})}.
 \ee
Note that \fer{q7} can be verified if and only if the functions on
both sides are constant. Thus, there is equality in \fer{q7} if and
only if
 \be\label{m2}
\log u_{j}(x) = c\kappa_j x^2 + c_1 x + d_1, \quad \log u_{j+1}(x) =
c\kappa_j x^2 + c_2 x + d_2 .
 \ee
In other words, there is equality in \fer{q7} if and only if $u_j$
and $u_{j+1}$ are multiple of Gaussian densities, of variances
$c({p_{j}p'_{j}})^{-1}$ and $c({p_{j+1}p'_{j+1}})^{-1}$,
respectively, for any given positive constant $c$. Therefore,
equality in \fer{e3} holds if and only if each function $u_j(x)$, $j
= 1,2,\dots, n$ is a multiple of a Gaussian density of variance
$c({p_{j}p'_{j}})^{-1}$.

\noindent Finally, with this choice of the diffusion coefficients,
for every $x \in \R$ and $t_1< t_2$,
 \be\label{q1}
u_1^{1/p_1}*u_2^{1/p_2}*\cdots * u_n^{1/p_n}(x,t_1) <
u_1^{1/p_1}*u_2^{1/p_2}*\cdots * u_n^{1/p_n}(x,t_2),
 \ee
unless all initial data are multiple of Gaussian densities with the
right variances. Clearly, \fer{q1} is equivalent to say that the
$n$-th convolution $w(x,t)$ is monotone increasing.
As identical proof holds in higher dimension.
This concludes
the proof of the Lemma.
\end{proof}

\begin{rem}
The result of Lemma \ref{le-young} remains true if each diffusion
coefficient $k_j$ is multiplied by a positive constant $C$. In this
case, equality holds if the functions $f_j$ are Gaussian functions
with variances $Cdk_j$.
\end{rem}

\begin{rem}
As already specified in the introduction, our quantity $w(x,t)$ is
related to a particular geometric Brascamp--Lieb inequality. Results
concerning more general Brascamp--Lieb inequalities that are related
to Lemma \ref{le-young} have been obtained by Bennett, carbery,
Christ and Tao in \cite{B1}. This clearly indicates that the proof
of Lemma \ref{le-young} presented here could be extended to cover
more general situations.
\end{rem}

\noindent Lemma \ref{le-young} has important consequences. Indeed,
let us introduce the functional
 \be\label{key1}
\Psi(t) = \sup_x w(x,t) = \sup_x u_1^{1/p_1}*u_2^{1/p_2}*\cdots *
u_n^{1/p_n}(x,t).
 \ee
It is a simple exercise to verify that, in view of conditions
\fer{coef+} on the constants $p_j$, the functional $\Psi(t)$ is
dilation invariant. In reason of this property we prove:

\begin{thm}\label{th-young}  Let $\Psi(t)$ be the  functional
\fer{key1}, where the functions $u_j(x,t)$, $j=1,2,\dots,n$, are
solutions to the heat equation corresponding to the initial values
$0 \le f_j(x)\in L^1(\R^d)$, $d \ge 1$. Then, if for each $j$ the
exponents $p_j$ satisfy conditions \fer{coef+} and the diffusion
coefficients are given by $\kappa_j = (p_jp'_j)^{-1}$, or by a
multiple of them, $\Psi(t)$ is increasing in time from
 \[
\Psi(0) = \sup_x f_1^{1/p_1}*f_2^{1/p_2}*\cdots * f_n^{1/p_n}(x)
 \]
to the limit value
 \be\label{lim11}
\lim_{t \to \infty} \Psi(t) = \prod_{j=1}^n C_{p_j}^d \left(
\int_{\R^d} | f_j(x)|\, dx \right)^{1/p_j}.
 \ee
The constants $C_{p_j}$  in \fer{lim11} are defined as in \fer{c+}.

Moreover, $\Psi(0) = \lim_{t\to\infty} \Psi(t)$ if and only if
$f_j(x)$, $j=1,2,\dots, n$, is a multiple of a Gaussian density of
variance $cd\kappa_j$,  with $c>0$.
\end{thm}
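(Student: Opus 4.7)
The plan is to derive the theorem from Lemma \ref{le-young} in three steps: monotonicity, identification of the asymptotic value by rescaling, and the equality case. Monotonicity is immediate from the Lemma: for every fixed $x \in \R^d$ the map $t \mapsto w(x,t)$ is non-decreasing, and hence so is $\Psi(t) = \sup_x w(x,t)$.

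For the limit as $t \to \infty$, I would exploit the dilation invariance of $\Psi$ together with the central limit property \fer{FPscal}--\fer{limi}. Define $U_j(y,t) = (1+2t)^{d/2}\, u_j(y\sqrt{1+2t},\, t)$. The hypothesis $\sum_j 1/p_j = n-1$ is precisely what makes the resulting powers of $(1+2t)$ and the Jacobian cancel under the substitution $y_j = x_j/\sqrt{1+2t}$ inside the convolution integral, producing
\[
w(x,t) \;=\; W\bigl(x/\sqrt{1+2t},\, t\bigr), \qquad W(y,t) := U_1^{1/p_1} * \cdots * U_n^{1/p_n}(y,t),
\]
so that $\Psi(t) = \sup_y W(y,t)$. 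Since $U_j(\cdot, t) \to \mu_j M_{\kappa_j}$ with $\mu_j := \int f_j$ by \fer{limi}, passing to the limit identifies the asymptotic profile of $W$ with $\prod_j \mu_j^{1/p_j} \cdot M_{\kappa_1}^{1/p_1} * \cdots * M_{\kappa_n}^{1/p_n}(y)$. With $\alpha_j = 1/p_j$ and $\kappa_j = (p_jp'_j)^{-1}$ one has $\sum_j \kappa_j/\alpha_j = \sum_j 1/p'_j = 1$, so by the explicit identity \fer{co2} this asymptotic profile is itself a centered Gaussian whose supremum is attained at $y=0$. A direct calculation using the prefactor $\Sigma_1$ and the definition $C_p^2 = p^{1/p}/(p')^{1/p'}$ reduces the sup to $\prod_j C_{p_j}^d\, \mu_j^{1/p_j}$, as asserted.

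For the equality case, suppose $\Psi(0) = \lim_{t\to\infty}\Psi(t)$; by monotonicity $\Psi$ is then constant on $[0,\infty)$. After the density reduction to smooth, rapidly decaying initial data already invoked in the proof of Lemma \ref{le-young}, pick $x^* \in \R^d$ where $w(x^*, 0) = \Psi(0)$. Pointwise monotonicity of $w(x^*, \cdot)$ combined with the squeeze $w(x^*, t) \le \Psi(t) = \Psi(0) = w(x^*,0)$ forces $t \mapsto w(x^*, t)$ to be constant. Then \fer{ris} evaluated at $x=x^*$, together with the strict positivity of $u_j(\cdot, t)$ for $t>0$ from the heat semigroup, makes the integrand $\mathfrak{L}$ vanish identically; this is exactly the equality condition analysed in Lemma \ref{le-young}, which forces each $f_j$ to be a multiple of a Gaussian of variance $cd\kappa_j$.

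The delicate step I expect to be the main obstacle is the transfer $\sup_y W(y,t) \to \sup_y W_\infty(y)$: one needs the convergence $U_j \to \mu_j M_{\kappa_j}$ in a sense strong enough to preserve suprema of convolutions, and one must rule out the maximizer of $W(\cdot, t)$ escaping to infinity as $t \to \infty$. A density reduction to Schwartz-class data, together with standard quantitative heat-kernel decay estimates on $U_j - \mu_j M_{\kappa_j}$, should provide the uniform control needed.
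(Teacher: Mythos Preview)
Your proposal is correct and follows essentially the same route as the paper: monotonicity from Lemma~\ref{le-young}, then the limit via dilation invariance and the central limit property \fer{limi}, with the Gaussian convolution evaluated explicitly. The only cosmetic differences are that the paper computes the limiting Gaussian convolution via the identity \fer{h1} rather than \fer{co2}, and that your treatment of the equality case and of the passage to the limit in the supremum is more explicit than the paper's, which simply invokes Lemma~\ref{le-young} and passes to the limit without further comment.
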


\begin{proof}
Thanks to Lemma \ref{le-young} we know that the functional $\Psi(t)$
is monotonically increasing from $\Psi(t=0)$, unless the initial
densities are Gaussian functions with the right variances. To
conclude the proof, it remains to show that the functional $\Psi(t)$
converges towards the limit value \fer{lim11} as time converges to
infinity. The computation of the limit value uses in a substantial
way the sca\-ling invariance of $\Psi$. In fact, thanks to the
dilation invariance, at each time $t>0$, the value of $\Psi(t)$ does
not change if we scale each function $u_j(x)$, $j = 1,2,\dots, n$,
according to
\begin{equation}\label{FP}
 u_j(x,t) \to U_j(x,t) = \left(\sqrt{1+2 t}\right)^d\, f(x\, \sqrt{1+2t}, t).
\end{equation}
On the other hand, the \emph{central limit property} \fer{limi}
implies that
 \be\label{limi2}
\lim_{t\to \infty} U_j(x,t) = M_{\kappa_j}(x) \, \int_{\R^d} f_j(x)
\, dx
 \ee
Therefore, passing to the limit one obtains
 \be\label{b3}
\lim_{t \to \infty} \Psi(t) = \prod_{j=1}^n \left( \int_{\R^d} |
f_j(x)|\, dx \right)^{1/p_j}\sup_x
M_{\kappa_1}^{1/p_1}*M_{\kappa_2}^{1/p_2}*\cdots *
M_{\kappa_n}^{1/p_n}(x).
 \ee
Owing to the identity
 \be\label{h1}
 M_{\kappa_j}^{1/p_j}(x) = C_{p_j}^d(2\pi)^{(2p'_j/d)^{-1}}M_{1/{p'_j}},
 \ee
and recalling that $\sum_{j=1}^n (p'_j)^{-1} = 1$, we obtain
 \[
M_{\kappa_1}^{1/p_1}*M_{\kappa_2}^{1/p_2}*\cdots *
M_{\kappa_n}^{1/p_n}(x) =
 \]
 \[
(2\pi)^{-d/2}\prod_{j=1}^n C_{p_j}^d M_1(x) = \prod_{j=1}^n
C_{p_j}^d \exp\{-|x|^2/2\}.
 \]
This implies \fer{lim11}, and concludes the proof of the theorem.
\end{proof}

\begin{rem}
Theorem \ref{th-young} is related to the monotonicity in time of a
dilation invariant functional whose components are solutions to the
heat equation. Therefore, the main importance of the theorem is to
highlight the existence of a new Lyapunov functional related to the
heat equation. This result, however, can be rephrased to give a new
proof of known inequalities in sharp form. Let us set, in Theorem
\ref{th-young}
 \[
 g_j(x) = f_j(x)^{1/p_j},
 \]
for $j=1,2, \dots,n$. Then, it holds
 \be\label{bl1}
 \sup_x g_1*g_2* \cdots * g_n(x) \le \prod_{j=1}^n C_{p_j}^d \prod_{j=1}^n
 \|g_j\|_{{p_j}}.
 \ee
Moreover, since
 \[
\sup_x g_1*g_2* \cdots * g_n(x) \ge \int g_1(-x_1)g_2(x_1 -
x_2)\dots g_n(x_{n-1})\, dx_1\dots dx_{n-1},
 \]
inequality \fer{bl1} implies, under the same conditions on the
constants $p_j$,
 \be\label{bl2}
 \int g_1(x_1)g_2(x_1 -
x_2)\dots g_n(x_{n-1})\, dx_1\dots dx_{n-1} \le \prod_{j=1}^n
C_{p_j}^d \prod_{j=1}^n
 \|g_j\|_{{p_j}}.
 \ee
Inequality \fer{bl2} is a particular case of the inequalities
obtained by Brascamp and Lieb \cite{BL} by a different method.
\end{rem}

\begin{rem}
Clearly, the proof of Theorem \ref{th-young} still holds when $n=2$.
In this case, however, the diffusion coefficients $\kappa_j$, $j
=1,2$ coincide. In fact, when $n =2$, the condition \fer{coef+}
reduces to
 \[
1 \le p_j \le +\infty; \quad \frac 1{p_1} +\frac 1{p_2} =1,
 \]
so that $p_1$ and $p_2$ are dual exponents. Consequently $p'_1 =
p_2$ and $p'_2= p_1$, which imply $\kappa_1 = \kappa_2= \kappa=
(p_1p_2)^{-1}$. But in this case the definition \fer{c+} of the
constant $C_p$ implies $C_{p_1} = 1/C_{p_2}$, and the limit
\fer{lim11} takes the value
 \be\label{lim12}
\lim_{t \to \infty} \Psi(t) = \left( \int_{\R^d} | f_1(x)|\, dx
\right)^{1/p_1}\left( \int_{\R^d} | f_2(x)|\, dx \right)^{1/p_2}.
 \ee
Note that in this case inequality \fer{bl2} reduces simply to the
classical H\"older inequality.
\end{rem}

\begin{rem}
As  noticed by Brascamp and Lieb \cite{BL}, Theorem \ref{th-young}
contains as special case the best possible improvement to Young's
inequality. If $n = 3$ \fer{bl2} reads
 \be\label{y2}
\int_{\R^{2d}} f(x) g(x-y) h(y) dx\, \dy \le
(C_pC_qC_s)^d\|f\|_{L^p}\|g\|_{L^q}\|h\|_{L^s},
 \ee
where $1 \le p,q,s \le \infty$, $1/p + 1/q + 1/s = 2$, and equality
holds when $f,g,h$ are suitable Gaussian functions. Choosing
 \[
h(y) = \left( f*g(y)\right)^{r-1}
 \]
leads to an equivalent form of \fer{y2}
 \be\label{y3}
 \| f*g\|_{L^r} \le (C_pC_qC_{r'})^d\|f\|_{L^p}\|g\|_{L^q},
 \ee
namely the standard form of Young's inequality \cite{Bec, BL}.

Also, repeated applications of \fer{y3} give
 \be\label{y4}
\|g_1*g_2*\cdots *g_n \|_r \le C_{r'}^d \prod_{j=1}^n C_{p_j}^d
\|g_j\|_{{p_j}},
 \ee
where $1 \le p_j\le \infty$ and $\sum_{j=1}^n 1/p_j = n-1 + 1/r$.
\end{rem}

\section{Further Lyapunov functionals}

Theorem \ref{th-young} shows the monotonicity in time of the
$L^\infty$-norm of the $n$-th convolution of type \fer{key1}, as
well as its convergence towards an explicitly computable limit value
(in terms of the initial data). The key point in getting this result
was the dilation property of the functional $\Psi(t)$.

To get a similar result for the $L^r$-norm of the $n$-th convolution
$w(x,t)$, $r>0$, and to obtain the (eventual) limit value, we need
that the dilation property still holds for $\| w(t)\|_r$. By
applying the same scaling $u_j(x) \to V_j(x) = a^d V(ax)$ to each
function $u_j(x)$ in \fer{nth} we get
 \[
 V_1*V_2* \cdots *V_n(x) = a^{d\gamma}u_1*u_2* \cdots * u_n(ax) = a^{d\gamma} w(ax),
 \]
where
 \[
 \gamma = \sum_{j=1}^n \alpha_j - n +1
 \]
Hence
 \[
 \int_{\R^d} \left(  V_1*V_2* \cdots *V_n(x)\right)^r \, dx = \int_{\R^d} a^{dr\gamma}w^r(ax)\, dx,
 \]
and dilation invariance occurs if and only if $r\gamma = 1$, that is
 \be\label{new1}
 \alpha_1 + \alpha_2 + \dots + \alpha_n = n-1 + \frac 1r.
 \ee
By analogy with condition \fer{coef+}, we will satisfy condition \fer{new1} in two separate cases.
The first refers to fix, for
$j= 1,\dots, n$ and $s$,  positive real numbers $p_j$ and $r$ such that
 \be\label{coef-}
 p_j < 1, r < 1; \quad \sum_{j=1}^n \frac 1{p_j} = n-1 + \frac 1r.
 \ee
The second refers to fix, for
$j= 1,\dots, n$ and $s$,  positive real numbers $p_j$ and $r$ such that
 \be\label{co+}
 1 < p_j \le \infty, 1< r \le \infty; \quad \sum_{j=1}^n \frac 1{p_j} = n-1 + \frac 1r.
 \ee
In the following, we will analyze the time behaviour of $\|
w(t)\|_r$ in the case \fer{coef-}. Then, the result for the case
\fer{co+} will follow by the same line of proof.

Condition \fer{coef-} implies that $p'_j < 0$  for all $j = 1,2,\dots, n$, and
 \[
\frac 1{p_j} = \sum_{i\not= j} \frac 1{p'_j}+ \frac 1r.
 \]
Making use of the proof of Lemma \ref{le-young}, let us set, for
$j = 1,2,\dots, n$, the (positive) coefficients of diffusion
 \be\label{n4}
\kappa_j = \frac 1{p_j |p'_j|}.
 \ee
Then, by means of elementary computations we obtain
 \be\label{n5}
\sum_{l=1}^n \kappa_l = \sum_{l=1}^n \frac 1{p_l |p'_l|} =
\sum_{i\not=j}  \frac 1{p_i |p'_j|} + \frac 1{r|r'|}.
 \ee
Since the real numbers $p_j$ now satisfy condition \fer{coef-}, the
quantity \fer{l1} considered in Lemma \ref{le-young}, with the same
choice \fer{aij} of the coefficients $a_{i,j}$  takes the form
 \be\label{e7}
\mathfrak{L} = -\left( \sum_{j=1}^n \frac{Q_j}{p'_j} \right)^2 +
\frac 1{r|r'|}\sum_{i\not=j}a_{i,j}Q_iQ_j.
 \ee
It is evident that in this case we cannot expect that $\mathfrak L$
has a definite sign. However, using expression \fer{e7} into
\fer{ris} we obtain
 \[
\frac{\partial w(x,t)}{\partial t}  = \int  u_1^{1/p_1}(x-x_1)\dots
u_n^{1/p_n}(x_{n-1}) \mathfrak L(u_1, \cdots u_n) \, dx_1\cdots
dx_{n-1} =
 \]
 \be\label{q8}
- \int  u_1^{1/p_1}(x-x_1)\dots u_n^{1/p_n}(x_{n-1})\left(
\sum_{j=1}^n \frac{Q_j}{p'_j} \right)^2\, dx_1\cdots dx_{n-1}  +
\frac 1{r|r'|} \frac{\partial^2 w }{\partial x^2 }.
 \ee
In fact, by formula \fer{fin1}
 \[
\int  u_1^{1/p_1}(x-x_1)\dots u_n^{1/p_n}(x_{n-1})
\sum_{i\not=j}a_{i,j}Q_iQ_j\, dx_1\cdots dx_{n-1}  =
 \]
 \[
\int  u_1^{1/p_1}(x-x_1)\dots u_n^{1/p_n}(x_{n-1})
\sum_{i\not=j}\frac{a_{i,j}}{p_ip_j} L_iL_j \, dx_1\cdots dx_{n-1} =
\frac{\partial^2 r }{\partial x^2 }.
 \]
Consequently, thanks to \fer{q8}
 \[
 \frac d{dt} \int w^r(x,t) \, dx = r \int w^{r-1}(x,t) \frac{\partial w(x,t)}{\partial
 t} \, dx = \frac 1{|r'|}\int w^{r-1}\frac{\partial^2 w }{\partial x^2 }\,
 dx +
 \]
 \[
-r \int w^{r-1} \int u_1^{1/p_1}(x-x_1)\dots u_n^{1/p_n}(x_{n-1})
\left( \sum_{j=1}^n \frac{Q_j}{p'_j} \right)^2 \, dx_1\cdots
dx_{n-1} \, dx =
 \]
 \be\label{q9}
\frac{(1-r)^2} r \int w^{r-2}\left( \frac{\partial w }{\partial
x}\right)^2 - r \int u_1^{1/p_1}(x-x_1)\dots u_n^{1/p_n}(x_{n-1})
\left( \sum_{j=1}^n \frac{Q_j}{p'_j} \right)^2.
 \ee
Surprisingly, the expression on \fer{q9} has a sign. This is
consequence of the following Lemma, which generalizes a similar
result that dates back to Blachman \cite{Bla}. In case of
convolution of two functions, analogous result has been obtained
recently in \cite{Tos1}.

\begin{lem}\label{blac}
Let $w(x)$ be the (smooth) $n$-th convolution defined by \fer{key1}.
Then, for any set of positive constants $p_j$ and $r$, and positive
constants $\lambda_j$, $j = 1,2, \dots ,n$ such that $\sum_{j=1}^n
\lambda_j = 1$ it holds
 \be\label{w1}
\int w^{r-2}\left( \frac{\partial w }{\partial x}\right)^2 \le \int
w^{r-1}(x) \int u_1^{1/p_1}(x-x_1)\dots u_n^{1/p_n}(x_{n-1}) \left(
\sum_{j=1}^n \frac{\lambda_j}{p_j}L_j \right)^2.
 \ee
Moreover, equality in \fer{w1} holds if and only if any function
$u_j$, $j = 1,2, \dots , n$ is  multiple of a Gaussian function of
variance $\lambda_j/p_j$.
\end{lem}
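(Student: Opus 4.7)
The strategy is a classical Blachman--Stam type computation: express $w'(x)$ in $n$ equivalent ways (putting the $x$-derivative on each factor in turn), form the convex combination with weights $\lambda_k$, and then apply Cauchy--Schwarz to the resulting integrand viewed against the natural probability-like density on $\R^{n-1}$.

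First, I would derive the key identity: for every fixed $k\in\{1,\dots,n\}$,
\begin{equation*}
w'(x)=\int u_1^{1/p_1}(x-x_1)\cdots u_n^{1/p_n}(x_{n-1})\,\frac{L_k}{p_k}\,dx_1\cdots dx_{n-1}.
\end{equation*}
The case $k=1$ is immediate by differentiating under the integral and using $(u_1^{1/p_1})'=p_1^{-1}u_1^{1/p_1}(\log u_1)'$. For $k\ge 2$, the argument $x_{k-1}-x_k$ of $u_k$ does not depend on $x$ in the original variables, so one first performs the change of variables $y_j=x_{j-1}-x_j$ (with $x_0=x$, $x_n=0$); then $y_k=x-\sum_{j\ne k}y_j$ is the one carrying the $x$-dependence, and differentiating the $k$-th factor yields the stated expression after reversing the substitution.

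Since $\sum_k\lambda_k=1$, multiplying the $k$-th identity by $\lambda_k$ and summing gives
\begin{equation*}
w'(x)=\int u_1^{1/p_1}(x-x_1)\cdots u_n^{1/p_n}(x_{n-1})\Bigl(\sum_{k=1}^n\frac{\lambda_k}{p_k}L_k\Bigr)\,dx_1\cdots dx_{n-1}.
\end{equation*}
The density $D(x_1,\dots,x_{n-1};x):=u_1^{1/p_1}(x-x_1)\cdots u_n^{1/p_n}(x_{n-1})$ has total mass $\int D\,dx_1\cdots dx_{n-1}=w(x)$. Cauchy--Schwarz applied to $w'(x)=\int \sqrt{D}\cdot\sqrt{D}\bigl(\sum_k\lambda_k p_k^{-1}L_k\bigr)$ then yields
\begin{equation*}
|w'(x)|^2\le w(x)\int D\,\Bigl(\sum_{k=1}^n\frac{\lambda_k}{p_k}L_k\Bigr)^{\!2}\,dx_1\cdots dx_{n-1}.
\end{equation*}
Dividing by $w(x)$, multiplying by $w^{r-1}(x)$, and integrating in $x$ produces exactly \fer{w1}.

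For the equality case, Cauchy--Schwarz is sharp iff $\sum_k\lambda_k p_k^{-1}L_k$ is (for $D$-a.e.\ $(x_1,\dots,x_{n-1})$ and a.e.\ $x$) a function of $x$ alone. Taking $\partial/\partial x_j$ of this constraint decouples consecutive indices and forces $(\log u_j)''$ to be a constant $-c_j$ for each $j$, i.e.\ each $u_j$ is a Gaussian; matching the constants from adjacent indices $\lambda_j c_j/p_j=\lambda_{j+1}c_{j+1}/p_{j+1}$ then pins the variances to $\lambda_j/p_j$ up to a common multiplicative constant, which is the stated equality condition. The only delicate point in the whole argument is the first step---verifying that $w'(x)$ can genuinely be written with the derivative on any single factor---since it underpins the freedom to introduce the weights $\lambda_k$; the rest is an application of Cauchy--Schwarz.
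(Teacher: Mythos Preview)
Your proof is correct and follows essentially the same route as the paper: write $w'(x)$ as an average of $\sum_k \lambda_k p_k^{-1}L_k$ against the density $u_1^{1/p_1}(x-x_1)\cdots u_n^{1/p_n}(x_{n-1})$, then square and bound. The only cosmetic difference is that the paper normalizes this density to a probability measure and invokes Jensen's inequality for $t\mapsto t^2$, whereas you phrase the same step as Cauchy--Schwarz; the equality analysis is likewise identical.
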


 \begin{proof}
By property \fer{con1},  if we take a set of positive constants
$\lambda_j$, $j = 1,2, \dots ,n$ such that $\sum_{j=1}^n \lambda_j =
1$ we can express the first derivative of $w(x)$ as
 \[
w'(x) =  \int u_1^{1/p_1}(x-x_1)\dots u_n^{1/p_n}(x_{n-1})
\sum_{j=1}^n \frac{\lambda_j}{p_j}L_j dx_1\dots dx_{n-1},
 \]
where $L_j$ is defined as in \fer{log1}. Therefore, by denoting
 \be\label{m3}
d\mu_x(x_1, \dots , x_{n-1}) = \frac{u_1^{1/p_1}(x-x_1)\dots
u_n^{1/p_n}(x_{n-1})}{w(x)},
 \ee
we obtain
 \[
\frac{w'(x)}{w(x)} = \int \sum_{j=1}^n \frac{\lambda_j}{p_j}L_j
d\mu_x(x_1, \dots , x_{n-1}).
 \]
Note that, for any $x \in \R$ the measure $d\mu$ defined in \fer{m3}
is a unit measure on $\R^{n-1}$,
 \[
 \int_{\R^{n-1}} d\mu_x \, dx_1\cdots
dx_{n-1} = 1.
 \]
Jensen's inequality then gives
 \be\label{jen}
\left(\frac{w'(x)}{w(x)}\right)^2 \le \int \left( \sum_{j=1}^n
\frac{\lambda_j}{p_j}L_j \right)^2 d\mu_x(x_1, \dots , x_{n-1}).
 \ee
Multiplying both sides of \fer{jen} by $w^r(x)$, and integrating
over $x$ proves the Lemma.

Note that, since equality in Jensen's inequality holds if and only
if the argument is constant, equality in \fer{jen} holds if and only
if
 \[
\sum_{j=1}^n \frac{\lambda_j}{p_j}L_j = const.
 \]
Hence, the reasoning of the last part of Lemma \ref{le-young} can be
repeated to show that there is equality in \fer{w1} if and only if
any function $u_j$, $j = 1,2, \dots , n$ is  multiple of a Gaussian
function of variance $\lambda_j/p_j$.
 \end{proof}

Let us return to formula \fer{q9}. Conditions \fer{coef-} imply that
 \[
 \sum_{j=1}^n \frac 1{|p'_j|} = \frac 1{|r'|}.
 \]
Hence
 \[
 \frac r{1-r} \sum_{j=1}^n \frac 1{|p'_j|} = 1.
 \]
Choosing then
 \[
\lambda_j = \frac r{1-r}\frac 1{|p'_j|},
 \]
we obtain that \fer{w1} reads
  \[
\int w^{r-2}\left( \frac{\partial w }{\partial x}\right)^2 \le \frac
{(1-r)^2}{r^2}\int w^{r-1}(x) \cdot
 \]
 \be\label{w2}
 \cdot \int u_1^{1/p_1}(x-x_1)\dots u_n^{1/p_n}(x_{n-1})
\left( \sum_{j=1}^n \frac{1}{p'_j}Q_j \right)^2.
 \ee
 This shows that the quantity in \fer{q9} is negative.
Hence, we proved that, if the positive constants $p_j$ and $s$
satisfy conditions \fer{coef-},  the functional
 \be\label{key2}
\Lambda(t) =  \| w(t)\|_{r} = \left( \int
(u_1^{1/p_1}*u_2^{1/p_2}*\cdots
* u_n^{1/p_n})^r(x,t) \, dx \right)^{1/r}
 \ee
is monotone decreasing.  Since we know that, in
view of conditions \fer{coef-} on the constants $p_j$, the
functional $\Lambda(t)$ is dilation invariant, we proved:

\begin{thm}\label{th-you-}  Let $\Lambda(t)$ be the  functional
\fer{key2}, where the functions $u_j(x,t)$, $j=1,2,\dots,n$, are
solutions to the heat equation corresponding to the initial values
$0 \le f_j(x)\in L^1(\R^d)$, $d \ge 1$. Then, if for each $j$ the
exponents $p_j$ satisfy conditions \fer{coef-} and the diffusion
coefficients are given by $\kappa_j = (p_j|p'_j|)^{-1}$,
$\Lambda(t)$ is decreasing in time from
 \[
\Lambda(0) = \left( \int \left(f_1^{1/p_1}*f_2^{1/p_2}*\cdots *
f_n^{1/p_n}(x)\right)^r \, dx \right)^{1/r}
 \]
to the limit value
 \be\label{lim13}
\lim_{t \to \infty} \Lambda(t) = C_{r'}^d\prod_{j=1}^n C_{p_j}^d \left(
\int_{\R^d} | f_j(x)|\, dx \right)^{1/p_j}.
 \ee
The constants $C_{p_j}$  in \fer{lim11} are defined by
 \be\label{c-}
 C_p^2 = \frac{p^{1/p}}{|p'|^{1/p'}},
  \ee
Moreover, $\Lambda(0) = \lim_{t\to\infty} \Lambda(t)$ if and only if
$f_j(x)$, $j=1,2,\dots, n$, is a multiple of a Gaussian density of
variance $d\kappa_j$.
\end{thm}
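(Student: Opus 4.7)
The plan is to assemble three ingredients that the text has already prepared: the time–derivative identity \fer{q9} for $\int w^r\,dx$; the Blachman-type bound of Lemma \ref{blac}; and the dilation invariance of $\Lambda$ combined with the central limit property \fer{limi2} of the heat equation. The proof then mirrors the structure of Theorem \ref{th-young}, the only genuinely new point being how the negativity of the derivative is obtained in the reverse range.

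First I would establish monotonicity. Conditions \fer{coef-} give $\sum_j |p'_j|^{-1} = |r'|^{-1} = (1-r)/r$, so the choice
\[
\lambda_j = \frac{r}{1-r}\,\frac{1}{|p'_j|}, \qquad j=1,\dots,n,
\]
satisfies $\sum_j \lambda_j = 1$ with $\lambda_j > 0$, and Lemma \ref{blac} applied with these weights is exactly inequality \fer{w2}. Substituting \fer{w2} into \fer{q9} cancels the two bracketed terms up to a non-positive remainder, yielding
\[
\frac{d}{dt}\int w^r(x,t)\,dx \le 0,
\]
so that $\Lambda(t)$ is non-increasing from $\Lambda(0)$.

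Next I would verify the dilation invariance of $\Lambda$: under $u_j(x)\mapsto a^d u_j(ax)$ the convolution scales as $w\mapsto a^{d\gamma} w(ax)$ with $\gamma=\sum_j p_j^{-1}-(n-1)$, and condition \fer{coef-} forces $\gamma = 1/r$, so $\int w^r\,dx$ is unchanged. Consequently, at every $t>0$ I may replace $u_j(\cdot,t)$ by the rescaled profile $U_j(\cdot,t)$ defined in \fer{FP} without altering $\Lambda(t)$. Sending $t\to\infty$ and using the central limit property \fer{limi2}, each $U_j$ converges to $\mu_j M_{\kappa_j}$ with $\mu_j=\|f_j\|_1$, hence (after justifying the interchange of $L^r$-integration and pointwise limit, which is the technical issue below)
\[
\lim_{t\to\infty}\Lambda(t) \;=\; \Bigl(\prod_{j=1}^n \mu_j^{1/p_j}\Bigr)\,\bigl\|M_{\kappa_1}^{1/p_1}*\cdots*M_{\kappa_n}^{1/p_n}\bigr\|_r.
\]
To evaluate the Gaussian norm I would use the analogue of \fer{h1} for $p_j<1$, namely $M_{\kappa_j}^{1/p_j}(x) = C_{p_j}^d (2\pi)^{d/(2p'_j)} M_{1/|p'_j|}(x)$ with $C_{p_j}$ given by \fer{c-}; since $\sum_j |p'_j|^{-1}=|r'|^{-1}$, the convolution collapses by $M_{\sigma_1}*M_{\sigma_2} = M_{\sigma_1+\sigma_2}$ to a scalar multiple of $M_{1/|r'|}$, and a direct computation of $\|M_{1/|r'|}\|_r$ furnishes the remaining factor $C_{r'}^d$ in \fer{lim13}. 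The equality statement follows from the equality case of Lemma \ref{blac}: $\Lambda(0)=\lim\Lambda(t)$ forces $d\Lambda/dt\equiv 0$, which requires each $f_j$ to be a Gaussian multiple with the prescribed variance proportional to $\kappa_j$.

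The main obstacle is the technical one of passing to the limit in $\Lambda$ when $r<1$: since $\|\cdot\|_r$ is no longer a norm, dominated convergence is not immediate, and I would reduce to smooth rapidly decaying initial data as in Lemma \ref{le-young}, approximate $w(x,t)$ uniformly on compact sets by the rescaled Gaussian convolution and use the Gaussian tail of $w(\cdot,t)$ (uniform in $t$ on bounded time intervals after appropriate truncation) to justify the exchange. The algebraic identity $\gamma=1/r$ guaranteeing scaling invariance, and the cancellation between the two terms of \fer{q9} produced by the Blachman-type bound with the specific weights $\lambda_j$, are the structural points that make the whole argument go through.
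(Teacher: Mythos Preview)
Your proposal is correct and follows essentially the same route as the paper: the monotonicity of $\Lambda(t)$ is obtained by combining the derivative identity \fer{q9} with Lemma \ref{blac} applied with exactly your weights $\lambda_j = \frac{r}{1-r}|p'_j|^{-1}$ (this is precisely the choice made in the paper just before the theorem), and the limit value is computed via dilation invariance, the central limit property, and the Gaussian identity \fer{h1}. You are in fact more careful than the paper on the technical point of justifying the passage to the limit inside $\|\cdot\|_r$ for $r<1$; the paper simply asserts the limit \fer{b33} without discussing this issue.
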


\begin{proof}
We know that the functional $\Lambda(t)$ is monotonically decreasing
from $\Lambda(t=0)$, unless the initial densities are Gaussian
functions with the right variances. In addition, $\Lambda(t)$ is
dilation invariant. As in Theorem \ref{th-young}, let us scale each
function $u_j(x)$, $j = 1,2,\dots, n$, according to \fer{FP}.
Therefore, by the central limit property, passing to the limit one
obtains
 \be\label{b33}
\lim_{t \to \infty} \Lambda(t) = \prod_{j=1}^n \left( \int_{\R^d} |
f_j(x)|\, dx \right)^{1/p_j} \left\|
M_{\kappa_1}^{1/p_1}*M_{\kappa_2}^{1/p_2}*\cdots *
M_{\kappa_n}^{1/p_n}\right\|_r.
 \ee
The value of the integral can be evaluated by using formula \fer{h1}
of Theorem \ref{th-young}, with the additional difficulty to
evaluate the norm of a Gaussian in $L^r$. Thanks to condition
\fer{coef-} we obtain
 \[
\left\|
M_{\kappa_1}^{1/p_1}*M_{\kappa_2}^{1/p_2}*\cdots *
M_{\kappa_n}^{1/p_n}\right\|_r = C_{r'}^d\prod_{j=1}^n C_{p_j}^d.
 \]
This concludes the proof of the theorem.
\end{proof}

The computations leading to Theorem \ref{th-you-} can be repeated
step-by-step in the case in which the $p_j$'s and $r$ satisfy
condition \fer{co+}. In this case, however, the sign of $\mathfrak
L$ changes, and we obtain

\begin{thm}\label{th-you+}  Let $\Lambda(t)$ be the  functional
\fer{key2}, where the functions $u_j(x,t)$, $j=1,2,\dots,n$, are
solutions to the heat equation corresponding to the initial values
$0 \le f_j(x)\in L^1(\R^d)$, $d \ge 1$. Then, if for each $j$ the
exponents $p_j$ satisfy conditions \fer{co+} and the diffusion
coefficients are given by $\kappa_j = (p_jp'_j)^{-1}$,
$\Lambda(t)$ is increasing in time from
 \[
\Lambda(0) = \left( \int \left(f_1^{1/p_1}*f_2^{1/p_2}*\cdots *
f_n^{1/p_n}(x)\right)^r \, dx \right)^{1/r}
 \]
to the limit value
 \be\label{lim14}
\lim_{t \to \infty} \Lambda(t) = C_{r'}^d\prod_{j=1}^n C_{p_j}^d \left(
\int_{\R^d} | f_j(x)|\, dx \right)^{1/p_j}.
 \ee
The constants $C_{p_j}$  in \fer{lim11} are defined by \fer{c+}.
Moreover, $\Lambda(0) = \lim_{t\to\infty} \Lambda(t)$ if and only if
$f_j(x)$, $j=1,2,\dots, n$, is a multiple of a Gaussian density of
variance $d\kappa_j$.
\end{thm}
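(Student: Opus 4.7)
The plan is to mirror the proof of Theorem~\ref{th-you-}, exploiting that the algebraic manipulations in Lemma~\ref{le-young} leading to the representation \fer{ris} of $\partial_t w$ never used the precise form of the constraint on the $p_j$'s. Under \fer{co+} one has $\sum_j 1/p'_j=1/r'$, so the identity used in Lemma~\ref{le-young} is replaced by
\[
 \sum_{l=1}^n \frac{1}{p_l p'_l} \;=\; \sum_{i\neq j}\frac{1}{p'_i p'_j} + \frac{1}{r r'}.
\]
Setting $\kappa_j=(p_j p'_j)^{-1}$ and retaining the weights $a_{i,j}$ of \fer{aij}, the quadratic form $\mathfrak L$ of \fer{l2} then decomposes as
\[
 \mathfrak L \;=\; \Bigl(\sum_{j=1}^n \frac{Q_j}{p'_j}\Bigr)^{\!2} + \frac{1}{r r'}\sum_{i\neq j} a_{i,j} Q_i Q_j.
\]
Inserting this into \fer{ris} and identifying the second summand (after integration against $u_1^{1/p_1}\cdots u_n^{1/p_n}$) with $\partial^2_x w$ via \fer{fin1} would yield $\partial_t w = P(x,t) + (rr')^{-1}\partial^2_x w$ with $P\ge 0$, where $P(x,t)$ is the $(x_1,\dots,x_{n-1})$-integral of $u_1^{1/p_1}\cdots u_n^{1/p_n}\bigl(\sum_j Q_j/p'_j\bigr)^2$.

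The next step is to multiply by $r w^{r-1}$, integrate in $x$ and integrate by parts the Laplacian contribution, producing
\[
 \frac{d}{dt}\!\int w^r\,dx \;=\; r\!\int w^{r-1}P\,dx \;-\; \frac{r-1}{r'}\!\int w^{r-2}(\partial_x w)^2 dx,
\]
a difference of two nonnegative quantities. To show the positive piece dominates I would invoke Lemma~\ref{blac} with the admissible choice $\lambda_j=r'/p'_j$ (note $\sum_j\lambda_j=r'\sum_j 1/p'_j=1$). For this choice $\sum_j (\lambda_j/p_j) L_j = r'\sum_j Q_j/p'_j$, and Blachman's estimate becomes
\[
 \int w^{r-2}(\partial_x w)^2 dx \;\le\; (r')^2 \int w^{r-1} P\,dx.
\]
The elementary identity $(r-1)r'=r$ makes this bound, once multiplied by $(r-1)/r'$, match the positive term exactly, so $d\Lambda/dt\ge 0$. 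Equality propagates back through Blachman to force each $f_j$ to be a Gaussian of variance proportional to $\kappa_j$, matching the equality statement.

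The limit value is then computed as in Theorem~\ref{th-you-}. Dilation invariance of $\Lambda$ is built into \fer{co+}, so the rescaling \fer{FP} combined with the central limit property \fer{limi2} reduces the limit to evaluating $\|M_{\kappa_1}^{1/p_1}*\cdots*M_{\kappa_n}^{1/p_n}\|_r$. Applying \fer{h1}, still valid in the present positive regime, together with $M_{\sigma_1}*M_{\sigma_2}=M_{\sigma_1+\sigma_2}$ and $\sum_j 1/p'_j=1/r'$, this convolution collapses to a numerical multiple of $M_{1/r'}$, whose $L^r$-norm in $\R^d$ contributes precisely the factor $C_{r'}^d$ with $C_p$ given by \fer{c+}, producing \fer{lim14}.

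The main obstacle is not analytic but the sign bookkeeping. In Theorem~\ref{th-you-} both contributions to $d\Lambda/dt$ carried a uniform sign and combined trivially, whereas here they have opposite signs and monotonicity hinges on Blachman's inequality being sharp in the right direction with the right constant. That this works is structural rather than accidental: the weights $\lambda_j=r'/p'_j$ that saturate Blachman are exactly the weights appearing in the sum-of-squares part of $\mathfrak L$, and the factor $(r-1)r'$ produced by the integration by parts equals $r$ on the nose. Everything else transcribes verbatim from the preceding theorems.
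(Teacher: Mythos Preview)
Your proof is correct and follows exactly the approach the paper indicates: repeat the computations of Theorem~\ref{th-you-} step by step, with the sign of the perfect-square part of $\mathfrak L$ flipped, and close via Lemma~\ref{blac} with $\lambda_j=r'/p'_j$. One small inaccuracy in your closing commentary: in Theorem~\ref{th-you-} the two contributions to $\tfrac{d}{dt}\int w^r$ in \fer{q9} also carry opposite signs, and Blachman's inequality is likewise what forces the conclusion there---so the mechanism is identical in both regimes, not ``trivial'' in one and delicate in the other.
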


\begin{rem} The monotonicity property of the functional $\Lambda(t)$ defined by \fer{key2}
have been noticed first by Bennett and Bez \cite{BB} by means of a
different approach. Consequently, the results of both Theorems
\ref{th-you-} and \ref{th-you+} also follow from their arguments. We
note, however, that the dilation invariance property of
$\Lambda(t)$, which is at the basis of the direct proof of the
Theorems, has not explicitly taken into account before.
\end{rem}

\begin{rem}
Theorems \ref{th-you-} and \ref{th-you+} show the monotonicity
properties of the $L^r$-norm of the $n$-th convolution of powers of
solutions to the heat equation. As discussed at the end of Theorem
\ref{th-young}, apart from its intrinsic physical interest, this
monotonicity can be rephrased in the form of inequalities for
convolutions in sharp form. In particular, when $n=2$, Theorem
\ref{th-you-} contains the sharp form of Young inequality in the
so-called reverse case
  \be\label{young-r}
\| f*g\|_r \ge (C_pC_qC_{r'})^d\| f\|_p\| g\|_q ,
 \ee
where $0 < p,q,r < 1$ while $1/p
+ 1/q = 1 +1/r$, and $C_p$ is defined by \fer{c-}.
\end{rem}

\begin{rem}
A particular case of Theorem \ref{th-you+} implies Babenko's inequality \cite{Bab}
(cf. also Beckner \cite{Bec}):
 \be\label{bab}
  \|\mathfrak F f\|_{q} \le C_{q}^d \|f\|_{q'},
 \ee
where $C_{q}$ is defined as in \fer{c+}, $q$ is an even integer $q =
2,4,6, \dots$, and $\mathfrak F f$ denotes the Fourier transform of
$f$. Here the Fourier transform is defined for integrable functions
by
 \[
\mathfrak F f(\xi)  = \int_{\R^d} \exp \left\{-2\pi i x\cdot\xi\right\}f(x) \, dx
 \]
Inequality \fer{bab} follows by choosing in Theorem \ref{th-you+}
$r= 2$ and $1/p_j = (2n-1)/2n$, which are such that condition
\fer{coef+} is satisfied. In this case, in fact, by setting $f_j =
f$, for $j =1,2,\dots, n$, and $g^q = f$, we obtain that $f$
satisfies the inequality
 \[
 \left( \int ( \ \underbrace{f*f*\cdots *f}_n \ )^2 \,dx \right)^{1/2} \le C_q^{dn} \| f \|_{q}^n.
 \]
 Since
  \[
 \mathfrak F \left( \ \underbrace{f*f*\cdots *f}_n \ \right) = \left(\mathfrak F f\right)^n,
  \]
 by Parceval's identity we conclude that
  \be\label{bab1}
  \left( \int ( \mathfrak F f )^{2n} \,d\xi \right)^{1/2} \le C_q^{dn} \| f \|_{q}^n.
  \ee
We remark that, as explicitly mentioned in \cite{BB}, the
monotonicity of the quantity in \fer{bab1} also follows from the
results in \cite{B1} (cf. also \cite{BB}). A further inside into
Haussdorff--Young inequality, with counterexamples to the
monotonicity of $\|\mathfrak F u^{1/p}(t)\|_{p'}$ whenever $p$ is
not an even integer can be found in \cite{BBC}.
\end{rem}

\section{Monotonicity and Pr\'ekopa--Leindler inequality}

The analysis of the preceding section shows the monotonicity
properties of the $L^r$ norm of the $n$-th convolution of powers of
the solutions to the heat equation. In particular Theorem
\ref{th-young} covers the $L^\infty$ case, while Theorem
\ref{th-you-} (respectively Theorem \ref{th-you+}) cover the case
$r<1$ (respectively $r>1$). Two limit cases remain to be examined,
namely the cases $r\to 0$ and $r\to1$.  Here we will briefly discuss
the first case, leaving the second to the next section.

Given a set of positive constants $q_j$, $j =1,2,\dots, n$, such
that $\sum_{j=1}^n 1/q_j = 1$, and a constant $N \gg 1$, we choose
in Theorem \ref{th-you-}
 \be\label{n1}
 p_j = \frac{q_j}N, \quad r = \frac 1{N-(n-1)}.
 \ee
Then, if $N\ge \max_j{q_j} + n$, conditions \fer{coef-} are
satisfied and the monotonicity of $\Lambda(t)^r$ is guaranteed. By
definition
 \[
 w(x)^r  = \left( u_1^{N/q_1}* u_2^{N/q_2}* \cdots * u_n^{N/q_n}(x,t)\right)^{1/N-n+1} =
  \]
  \[
 \left( \int \left( u_1(x-x_1)^{1/q_1}\cdots u_n(x_{n-1})^{1/q_n}\right)^N \, dx_1 \cdots dx_{n-1} \right)^{1/N-n+1}.
 \]
Hence we obtain
 \be\label{li8}
 \lim_{N \to \infty} \int  w(x)^r \, dx = \int \sup_{x_1,\dots,x_{n-1}} u_1(x-x_1)^{1/q_1}\cdots u_n(x_{n-1})^{1/q_n} \, dx.
 \ee
This implies that, if $\Upsilon_N(t)$ denotes the functional
 \be\label{upsN}
 \Upsilon_N(t) = \left( \int \left( u_1(x-x_1)^{1/q_1}\cdots u_n(x_{n-1})^{1/q_n}\right)^N \, dx_1 \cdots dx_{n-1} \right)^{1/N-n+1},
 \ee
thanks to Theorem \ref{th-you-},  $\Upsilon_N(t)$ is monotonically
decreasing in time, provided the coefficients of diffusion are the
correct ones.

Note that, for any given $N$, the coefficients of diffusions $\kappa_j$ depend on it, and
  \[
 \kappa_j^N = \frac {N(N-q_j)}{q_j^2}.
 \]
On the other hand, Theorem \ref{th-you-} remains true if we multiply
all coefficients of diffusion by the same constant. Therefore,
without affecting the monotonicity of $\Upsilon_N(t)$ we can fix the
coefficients of diffusion as
 \be\label{c4}
\kappa_j^N = \frac {N(N-q_j)}{N^2}\frac 1{q_j^2}.
 \ee
By letting $N \to \infty$ we finally obtain that the functional
 \be\label{ups}
 \Upsilon(t) = \int \sup_{x_1,\dots,x_{n-1}} u_1(x-x_1)^{1/q_1}\cdots u_n(x_{n-1})^{1/q_n} \, dx
 \ee
is monotonically decreasing in time if the coefficients of diffusion
in the heat equations are given by $\kappa_j = 1/q_j^2$. Since the
functional $\Upsilon(t)$ is invariant under dilation, we can pass to
the limit to find the lower bound. By the same argument of the proof
of Theorem \ref{th-young}, we conclude that the limit value is
obtained by setting
 \[
 u_j(x) = \int f_j(x) \, dx M_{1/q_j^2}.
 \]
Explicit computations then show that
 \be\label{fi8}
 \lim_{t \to \infty} \Upsilon(t) = \prod_{j=1}^n q_j^{-d/q_j}\left( \int f_j(x) \, dx. \right)^{1/q_j}
 \ee
By setting $f_j(x) = g_j(q_jx)$, which implies
 \[
 \int_{\R^d} f_j(x)^{1/q_j} \,dx = q_j^{-d/q_j} \int_{\R^d} q_j(x)^{1/q_j} \,dx,
 \]
we conclude with the following

 \begin{thm}\label{th-pl}  Let $\Upsilon(t)$ denote the  functional \fer{ups}
 where the functions $u_j(x,t)$, $j=1,2,\dots,n$, are
solutions to the heat equation corresponding to the initial values
$0 \le f_j(x)\in L^1(\R^d)$, $d \ge 1$. Then, if
exponents $q_j$ satisfy $\sum_{j=1}^n q_j^{-1} = 1$, and the diffusion
coefficients are given by $\kappa_j = q_j^{-2}$,
$\Upsilon(t)$ is decreasing in time from
 \[
\Upsilon(0) = \int \sup_{x_1,\dots,x_{n-1}} f_1(q_1(x-x_1))^{1/q_1}\cdots f_n(q_n(x_{n-1}))^{1/q_n} \, dx
 \]
to the limit value
 \be\label{lim15}
\lim_{t \to \infty} \Upsilon(t) = \prod_{j=1}^n  \left(
\int_{\R^d} | f_j(x)|\, dx \right)^{1/q_j}.
 \ee
Moreover, $\Upsilon(0) = \lim_{t\to\infty} \Upsilon(t)$ if and only if
$f_j(x)$, $j=1,2,\dots, n$, is a multiple of a Gaussian density of
variance $d\kappa_j$.
\end{thm}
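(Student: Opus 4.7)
The plan is to obtain Theorem \ref{th-pl} as the limiting case $r \to 0$ of Theorem \ref{th-you-}. Most of the calculations are already laid out in the paragraphs preceding the statement; what remains is to assemble them cleanly and to justify the passage to the limit.

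First, I would apply Theorem \ref{th-you-} with the family of exponents $p_j^{(N)} = q_j/N$ and $r^{(N)} = 1/(N-n+1)$ for integers $N$ large enough that the hypotheses \fer{coef-} are satisfied; since $\sum 1/q_j = 1$, the compatibility condition is automatic. The theorem then yields that, for each such $N$, the functional
\[
\Upsilon_N(t) = \int \bigl(u_1^{N/q_1}*u_2^{N/q_2}*\cdots * u_n^{N/q_n}\bigr)^{1/(N-n+1)}(x,t)\,dx
\]
is decreasing in time when the diffusion coefficients are chosen as $(p_j^{(N)}|{p_j^{(N)}}'|)^{-1}$. Rescaling all of them by the common constant $1/N^2$, which preserves monotonicity by the remark following Lemma \ref{le-young}, gives $\kappa_j^{(N)}$ as in \fer{c4}, values that converge to $1/q_j^2$ as $N \to \infty$, matching the coefficients in the statement.

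Second, I would pass to the limit $N \to \infty$. The elementary $L^N \to L^\infty$ identity, applied pointwise in $x$ to $h(x_1,\dots,x_{n-1}) = u_1(x-x_1)^{1/q_1}\cdots u_n(x_{n-1})^{1/q_n}$, yields $\Upsilon_N(t) \to \Upsilon(t)$, so the monotonicity of each $\Upsilon_N$ transfers to monotonicity of $\Upsilon$. The scaling \fer{scal} together with $\sum 1/q_j = 1$ shows that $\Upsilon$ is dilation invariant; thanks to the central limit property \fer{limi} applied to each $u_j$, one may then compute $\lim_{t \to \infty}\Upsilon(t)$ by replacing $u_j(x,t)$ with the Gaussian $\bigl(\int f_j\bigr)M_{\kappa_j}(x)$, exactly as in the proofs of Theorems \ref{th-young} and \ref{th-you-}. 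A direct Gaussian calculation gives $\prod_j q_j^{-d/q_j}(\int f_j)^{1/q_j}$, and the prefactor is absorbed by the rescaling $f_j(x) = g_j(q_j x)$ already encoded in the statement of $\Upsilon(0)$, producing \fer{lim15}.

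The main obstacle is the passage $N\to\infty$: one must ensure that the interchange of supremum and integration over $x$ is legitimate and that the monotonicity, valid for each finite $N$, survives the limit. Under the smoothness and decay assumptions imposed in the proof of Lemma \ref{le-young}, dominated convergence with the H\"older-type majorant from \fer{ho} suffices, and the general case follows by density. The rigidity statement (equality if and only if each $f_j$ is a multiple of a Gaussian of variance $d\kappa_j$) is then inherited from the corresponding rigidity in Lemma \ref{le-young} via the same limiting procedure, since the Gaussian profiles are stable under the reparametrization $p_j^{(N)} \to 0$, $\kappa_j^{(N)} \to 1/q_j^2$.
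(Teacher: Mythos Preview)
Your proposal is correct and follows exactly the paper's own argument: the paper obtains Theorem \ref{th-pl} as the $N\to\infty$ limit of Theorem \ref{th-you-} with $p_j=q_j/N$, $r=1/(N-n+1)$, rescales the diffusion coefficients by $1/N^2$ as in \fer{c4}, and then computes the limiting value via dilation invariance and the central limit property. Your discussion of the interchange-of-limits obstacle and the inheritance of the rigidity case is in fact slightly more explicit than what the paper provides.
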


\begin{rem}
If $n=2$ the monotonicity of the functional $\Upsilon$  proven in
Theorem \ref{th-pl} implies the classical Pr\'ekopa--Leindler
inequality. In this case, in fact one obtains the
Pr\'ekopa--Leindler theorem \cite{Lei,Pr1,Pr2} that reads
 \[
 \|h\|_1 \ge \|f\|_1^\lambda\|g\|_1^{1-\lambda},
 \]
 where
 \[
 h(x|f,g) = \sup_x f\left( \frac{x-y}\lambda  \right)^{\lambda}g\left( \frac{x-y}{1-\lambda}  \right)^{1-\lambda}.
  \]
The  derivation of Pr\'ekopa--Leindler inequality from the Young's
inequality has been obtained by Brascamp and Lieb \cite{BL}. Our
result, however, enlightens a new meaning of this inequality, that
is viewed as a consequence of the monotonicity of a Lyapunov
functional of the convolution of two powers of the solution to the
heat equation.
 \end{rem}
 \begin{rem}
Theorem \ref{th-pl} is a corollary of the general result of Theorem
\ref{th-you-}. However, a direct proof of monotonicity could be
possible by looking at the functional \fer{ups} directly.
 \end{rem}

\section{A short proof of entropy power inequality}

In its original version, Shannon's entropy power inequality
(\emph{EPI}) \cite{Sha} gives a lower bound on Shannon's entropy
functional of the sum of independent random variables $X, Y$ with
densities
 \be\label{entr}
\exp\left(\frac 2d H(X+Y)\right) \ge \exp\left(\frac 2d H(X)\right)+
\exp \left( \frac 2d H(Y)\right),
 \ee
with equality if $X$ and $Y$ are Gaussian random variables.
Shannon's entropy functional of the probability density function $f(x)$ of $X$ is
 \be
H(X) = H(f) = - \int_{\R^d} f(v) \log f(v)\, dv.
 \ee
Note that Shannon's entropy functional coincides to Boltzmann's
entropy up to a change of sign.  The entropy-power
 \[
 N(X) = N(f) = \exp\left(\frac 2d H(X)\right)
 \]
(variance of a Gaussian random variable with the same Shannon's
entropy functional) is maximum and equal to the variance when the
random variable is Gaussian, and thus, the essence of \fer{entr} is
that the sum of independent random variables tends to be \emph{more
Gaussian} than one or both of the individual components.

The first rigorous proof of inequality \fer{entr} was given by Stam
\cite{Sta} for the case $d=1$ (see also Blachman \cite{Bla} for the
generalization of \emph{EPI} to $d$-dimensional random vectors), and
was based on an identity which couples Fisher's information with
Shannon's entropy functional \cite{CTh}.

Making use of the relationship between mutual information and
minimum mean-square error for additive Gaussian channels \cite{GSV},
a different and simpler proof of \emph{EPI} based on an elementary
estimation--theoretic reasoning which sidesteps invoking Fisher's
information, and makes use of a result of Lieb \cite{Lieb}, was
recently given in \cite{GSV2} (see also Rioul \cite{Rio} for a
unified view of proofs of \emph{EPI} via Fisher's information and
minimum mean-square errors).

Other variations of the entropy--power inequality are present in the
literature. Costa's strengthened entropy--power inequality
\cite{Cos}, in which one of the variables is Gaussian, and a
generalized inequality for linear transforms of a random vector due
to Zamir and Feder \cite{ZF}.

Also, other properties of Shannon's entropy-power $N(f)$ have been
investigated so far. In particular, the \emph{concavity of entropy
power} theorem, which asserts that
 \be\label{conc}
\frac{d^2}{dt^2}\left(N(u(t))\right) \le 0
 \ee
provided that $u(t)$ is the solution  to the heat equation
\fer{heat}. Inequality \fer{conc} is due to Costa \cite{Cos}. Later,
the proof has been simplified in \cite{Dem,DCT}, by an argument
based on the Blachman-Stam inequality \cite{Bla}. More recently, a
short and simple proof has been obtained by Villani \cite{Vil},
using an old idea by McKean \cite{McK}. Various consequences of
inequality \fer{conc}, including the logarithmic Sobolev inequality
and Nash's inequality have been recently discussed in \cite{Tos2}.

As noticed by Lieb \cite{Lieb}, the \emph{EPI} can also be proven as
a limit case of the Young inequality in the sharp form \fer{y3},  by
letting the parameters $p,q$ and $r$ tend to one. This result can be
obtained as follows. Let $0< a < 1$ denote a fixed constant. For a
given (small) positive $\c$, let us consider Young's inequality
\fer{y3} with
 \be\label{pq}
 r=1+\c, \qquad  p = \frac{1+\c}{1+a \c}, \qquad  q = \frac{1+\c}{1+(1-a)
 \c},
 \ee
which are such that
 \[
 \frac 1r +1 = \frac 1p + \frac 1q.
 \]
Note that, as $\c \to 0$, $p,q,r \to 1$. Let  $f,g, h$ be smooth probability
densities, and let us define $z(\c) = \|h\|_{1+\c}$. Then $z(0) = 1$,
and thanks to the identity
 \[
 z(\c) = \exp\left\{ \frac 1{1+\c} \log
\int_{\R^d} h^{1+\c}\, dv \right\},
 \]
one evaluates straightforwardly
 \be\label{der}
 z'(\c) = z(\c)\left[ -\frac 1{(1+\c)^{2}}\log
\int h^{1+\c} + \frac 1{1+\c} \frac{\int h^{1+\c}\log h}{\int
h^{1+\c}}\right].
 \ee
Hence,
 \[
z'(0) = \int_{\R^d} h \log h \, dv = -H(h).
 \]
Owing to the smoothness of $f*g$, we can expand $\|f*g\|_{1+\c} $ in
Taylor's series of $\c$ up to order one, to obtain
 \be\label{left}
\|f*g\|_{1+\c}= 1 - H(f*g)\c + o_1(\c),
 \ee
where $o_1(\c)$ is such that $o_1(\c)/\c \to 0$ as $\c \to 0$.
Analogous computations for the function
 \[
\omega(\c) = \exp\left\{ d\log(C_pC_qC_{r'}) + \frac 1{p} \log
\int_{\R^d} f^{p}\, dv + \frac 1{q} \log \int_{\R^d} g^{q}\,
dv\right\}
 \]
where $p$ and $q$ are defined in \fer{pq}, allow to conclude that
 \be\label{der1}
 \omega'(0) = \frac d2\left( a \log a + (1-a) \log (1-a)\right) -(1-a)H(f) -a
 H(g).
 \ee
Therefore, expanding again in Taylor's series of $\c$, we obtain
 \be\label{right}
\omega(\c) =  1 + \left( \frac d2\left( a \log a + (1-a) \log
(1-a)\right) -(1-a)H(f) -a H(g)\right) \c + o_2(\c),
 \ee
where again $o_2(\c)/\c \to 0$ as $\c \to 0$. It is interesting to
remark that the sharp constant $(C_pC_qC_{r'})^d$ furnishes an
important contribution to formula \fer{left}.  This contribution can
be derived straightforwardly using the identity
 \[
\frac d{d\c}\left(\frac 1p\right) = - \frac d{d\c}\left(\frac 1{p'}\right).
 \]
This gives
  \[
\frac d{d\c} \log C_p^2 = \frac d{d\c} \left( \frac 1p \log p -
\frac 1{p'} \log p' \right) = \left( -2 + \log \frac p{p'}
\right)\frac d{d\c}\left(\frac 1p\right) =
  \]
 \[
\left( -2 + \log (p-1) \right)\frac d{d\c}\left(\frac 1p\right) = \frac{1- a}{(1+\c)^2}\left( 2 - \log \frac{(1-a)\c}{1+a\c} \right),
 \]
and
 \[
\frac d{d\c}\log(C_pC_qC_{r'})^2 = \frac 1{(1+\c)^2} \left( (1-a)
\log \frac{1-a}{1+a\c} + a \log \frac{a}{1+(1-a)\c}\right).
 \]
In conclusion we have the following \cite{Lieb}:

\begin{lem}\label{loc2}
Let the probability densities $f(x)$ and $g(x)$ $x \in \R^d$ possess bounded
Shannon's entropy functional. Then, for any positive constant $0 <
a < 1$ the following inequality holds
 \be\label{EPI1}
 H(f*g) \ge (1-a)H(f) +a H(g) -\frac d2 \left( a \log a + (1-a) \log (1-a)\right).
 \ee
 \end{lem}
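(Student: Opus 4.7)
The plan is to derive \fer{EPI1} as a first-order Taylor expansion of the sharp Young inequality \fer{y3} around the ``trivial'' point $p=q=r=1$, following the strategy already sketched in the setup preceding the Lemma. Concretely, I apply \fer{y3} (with $n=2$, for $f$ and $g$) with the parameters \fer{pq}, which satisfy $1/p+1/q = 1+1/r$ and collapse to $p=q=r=1$ as $\c \to 0^+$. For probability densities, both sides of \fer{y3} evaluate to $1$ at $\c = 0$, so the inequality gives no information at zeroth order, and the content appears at first order in $\c$.

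First I would make both sides into smooth functions of $\c$: define $z(\c) = \|f*g\|_{1+\c}$ and
\[
\omega(\c) = (C_p C_q C_{r'})^d \|f\|_p \|g\|_q,
\]
so that $z(\c) \le \omega(\c)$ by Young's inequality, with $z(0) = \omega(0) = 1$. This forces $z'(0) \le \omega'(0)$. The computation of $z'(0)$ is already done in the text: differentiating the identity $z(\c) = \exp\{(1+\c)^{-1} \log \int (f*g)^{1+\c}\}$ as in \fer{der} and evaluating at $\c=0$ gives $z'(0) = -H(f*g)$.

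Next I would compute $\omega'(0)$. Writing $\omega(\c) = \exp\{d \log(C_p C_q C_{r'}) + p^{-1}\log\int f^p + q^{-1}\log\int g^q\}$ and differentiating term by term, the contributions from $p^{-1}\log\int f^p$ and $q^{-1}\log\int g^q$ give, at $\c=0$, the Shannon entropies $-(1-a)H(f)$ and $-aH(g)$ respectively (by the same identity used for $z'(0)$, with derivatives $\frac{d}{d\c}(1/p)\big|_0 = -a$ and $\frac{d}{d\c}(1/q)\big|_0 = -(1-a)$). The delicate piece is the contribution from the sharp constant $(C_p C_q C_{r'})^d$: using $\log C_p^2 = \frac{1}{p}\log p - \frac{1}{p'}\log p'$ and the chain rule as laid out in the text, one obtains
\[
\frac{d}{d\c}\log(C_p C_q C_{r'})^2 \bigg|_{0} = a\log a + (1-a)\log(1-a),
\]
so that the constant contributes $\tfrac{d}{2}(a\log a + (1-a)\log(1-a))$ to $\omega'(0)$. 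Altogether $\omega'(0) = \tfrac{d}{2}(a\log a + (1-a)\log(1-a)) - (1-a)H(f) - aH(g)$, as in \fer{der1}, and combining with $z'(0) \le \omega'(0)$ yields \fer{EPI1}.

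The main obstacle, as the text itself signals, is bookkeeping on the sharp-constant term: one must correctly handle $p' \to \infty$ and $r'\to\infty$ as $\c \to 0$, and check that the apparent singularities in $\log p'$, $\log q'$, $\log r'$ cancel against the vanishing factors $1/p'$, $1/q'$, $1/r'$, leaving the finite limit above. A justification by regularization (assume first that $f,g$ are smooth, compactly supported, and bounded away from $0$ on their supports, then remove the assumption by density) takes care of the differentiation under the integral needed in \fer{der}. Once these technicalities are handled, the inequality drops out by comparing first-order coefficients, and no further estimates are required.
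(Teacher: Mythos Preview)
Your proposal is correct and follows essentially the same route as the paper: both derive \fer{EPI1} by Taylor-expanding the sharp Young inequality \fer{y3} at the point $p=q=r=1$ via the parametrisation \fer{pq}, computing $z'(0)=-H(f*g)$ and $\omega'(0)$ as in \fer{der1}, and comparing first-order terms (with a density argument for the smoothness needed to differentiate under the integral). One cosmetic slip: your parenthetical values are swapped, since in fact $\frac{d}{d\c}(1/p)\big|_{0}=-(1-a)$ and $\frac{d}{d\c}(1/q)\big|_{0}=-a$; this does not affect your (correct) final expressions for the entropy contributions.
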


\begin{proof}
The proof is a direct consequence of the sharp Young inequality
\fer{y3}. With our notations, Young inequality can be rephrased as
$z(\c) - \omega(\c) \le 0$. Using expansions \fer{left} and
\fer{right}, and letting $\c \to 0$, inequality \fer{EPI1} follows
for smooth densities. A standard density argument then concludes the
proof.
\end{proof}

Shannon's entropy power inequality then follows by maximizing the
right-hand side of inequality \fer{EPI1}. A simple computation shows
that the right-hand side, say $A(a, H(f), H(g))$ attains the maximum
when
 \be\label{max1}
 a = \bar a = \frac{\exp \left\{ 2 \left( H(g) - H(f) \right)/d \right\}}{ 1 + \exp\left\{ 2 \left( H(g) - H(f) \right)/d \right\} },
 \ee
and, for $a=\bar a$
 \be\label{value}
A(\bar a, H(f), H(g)) = \frac d2 \log \left\{ \exp\left(2H(f)/d
\right)+ \exp \left( 2H(g)/d \right) \right\}.
 \ee
With analogous computations, Shannon's entropy-power inequality can
be easily extended to a convolution of $n$ probability densities by
means of Theorem \ref{th-you+}.

While the result of Lieb \cite{Lieb} outlines an interesting
connection between Young's inequality and the entropy power
inequality, the proof of \emph{EPI} via Young's inequality does not
contain any connection with our idea about monotonicity properties
of Lyapunov functionals for the solution to the heat equation.
Indeed, a much simpler direct proof is available by making use of
this idea. For the moment, let us fix the dimension equal to $1$.

Let as usual $w(x,t)$ denote the $n$-th convolution
 \be\label{c3}
 w(x,t) = u_1*u_2* \cdots *u_n(x,t),
 \ee
where the functions $u_j(x,t)$, $j=1,2,\dots,n$, are solutions to
the heat equations, with coefficients of diffusion $\kappa_j$,
corresponding to the initial probability densities $0 \le f_j(x)$
with bounded Shannon's entropy. It is important to note that, in
view of the closure property of the Gaussian density \fer{max} with
respect to convolutions, $w(x,t)$ itself satisfies the heat equation
\fer{heat} with coefficient of diffusion $\kappa = \sum_{j=1}^n
\kappa_j$.  For any set of positive values $\gamma_j$,
$j=1,2,\dots,n$,  such that $\sum_{j=1}^n \gamma_j = 1$, we
introduce the functional
 \be\label{lh}
 \Phi(t) = H(w(t)) - \sum_{j=1}^n \gamma_j H(u_j(t)).
 \ee
Let $f_\alpha$ be the scaled function defined as in \fer{scal}. Since, for $\alpha >0$
 \[
 H(f_\alpha) = H(f) - \log \alpha,
 \]
the functional $\Phi(t)$ is dilation invariant. Given $t>0$, let us evaluate the time derivative of $\Phi(t)$. We obtain
 \be\label{der2}
 \frac d{dt} H(w(t)) = \kappa I( w(t)) - \sum_{j=1}^n \gamma_j \kappa_j I(u_j(t)),
 \ee
where we defined by $I(f)$ the Fisher information of the density $f$, given in any dimension $d \ge 1$ by
 \be\label{fish}
  I(f) = \int_{\R^d} \frac{|\nabla f(x)|^2}{f(x)}\, dx.
 \ee
By setting in Lemma \ref{blac} $r =1$ and $p_j=1$, $j=1,2,\dots,n$, which satisfy conditions \fer{co+},  inequality \fer{w1} assumes the form
 \be\label{f5}
 I(w ) \le \int dx \int u_1(x-x_1)\dots u_n(x_{n-1}) \left(
\sum_{j=1}^n \lambda_jL_j \right)^2=
 \sum_{j=1}^n \lambda_j^2 I(u_j).
 \ee
Formula \fer{f5} follows simply owing to the definition of $L_j$, and applying Fubini's theorem.
The proof of \fer{f5} in the case of the convolution of two functions goes back to Blachman \cite{Bla}.

Thanks to \fer{f5}, by setting the constants $\gamma _j = k_j/k$, we
have at once that these constants satisfy the condition
$\sum_{j=1}^n \gamma_j = 1$, and that the sign of the derivative
\fer{der2}, consequent to this choice, is negative, unless the
functions $u_j$ are Gaussian. Since the functional $\Phi(t)$ is
dilation invariant, we can pass to the limit $t \to \infty$
obtaining
 \be\label{li6}
 \lim_{t\to\infty} \Phi(t) =  H(M_\kappa) - \sum_{j=1}^n \frac{\kappa_j}\kappa H(M_{\kappa_j}).
 \ee
Since
 \[
 H(M_\sigma) = \frac 12 \log 2\pi\sigma,
 \]
 we obtain from \fer{li6}
 \be\label{li7}
  \lim_{t\to\infty} \Phi(t) = - \frac 12 \sum_{j=1}^n \frac{\kappa_j}\kappa \log \frac{\kappa_j}\kappa.
 \ee
Clearly, the same result holds in dimension $d \ge 1$.
Hence we proved the following:

\begin{thm}\label{th-ent}  Let $\Phi(t)$ be the  functional
\fer{key2}, where the functions $u_j(x,t)$, $j=1,2,\dots,n$, are
solutions to the heat equation corresponding to the initial probability densities
$f_j(x)\in L^1(\R^d)$, $d \ge 1$. Then, if the diffusion
coefficients $\kappa_j = C\gamma_j$, $j=1,2,\dots,n$ and $C>0$,
$\Phi(t)$ is decreasing in time from
 \[
\Phi(0) = H(f_1*f_2*\cdots *f_n) - \sum_{j=1}^n \gamma_j H(f_j)
 \]
to the limit value
 \be\label{lim16}
\lim_{t \to \infty} \Phi(t) = - \frac d2 \sum_{j=1}^n \gamma_j \log \gamma_j.
 \ee
Moreover, $\Phi(0) = \lim_{t\to\infty} \Phi(t)$ if and only if
$f_j(x)$, $j=1,2,\dots, n$, is a Gaussian density of
variance $d\kappa_j$.
\end{thm}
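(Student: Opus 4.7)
The plan is to follow the strategy sketched in the paragraphs preceding the theorem: differentiate $\Phi(t)$ in time via de Bruijn's identity, apply the generalized Blachman--Stam inequality of Lemma \ref{blac} to control the Fisher information of $w$ by a weighted sum of the $I(u_j)$'s, and then use dilation invariance together with the central limit property \fer{limi2} to identify the infinite-time limit.

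First I would record dilation invariance of $\Phi$. Under the scaling $f \mapsto f_a(x) = a^d f(ax)$ one has $H(f_a) = H(f) - d\log a$, and since convolution commutes with this scaling, $w_a = (u_1 * \cdots * u_n)_a$ when each $u_j$ is rescaled. The log-$a$ contributions in $H(w_a) - \sum_j \gamma_j H(u_{j,a})$ cancel because $\sum_j \gamma_j = 1$, so $\Phi$ is dilation invariant. Next, because each $u_j$ solves $\partial_t u_j = \kappa_j \Delta u_j$ and Gaussians are closed under convolution, $w(x,t)$ itself solves the heat equation with coefficient $\kappa = \sum_j \kappa_j = C$. De Bruijn's identity then gives $\frac{d}{dt}H(u_j) = \kappa_j I(u_j)$ and $\frac{d}{dt}H(w) = \kappa I(w)$, so
\[
\frac{d}{dt}\Phi(t) = \kappa\, I(w(t)) - \sum_{j=1}^n \gamma_j \kappa_j\, I(u_j(t)).
\]

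To show that this derivative is nonpositive, I would invoke Lemma \ref{blac} at the endpoint $r = 1$, $p_j = 1$ for every $j$, with the free weights chosen as $\lambda_j = \kappa_j/\kappa$, which satisfy $\sum_j \lambda_j = 1$. Since the $L_j$'s are integrated against disjoint coordinates in the unit measure $d\mu_x$, the cross terms vanish after Fubini, and the lemma reduces to the classical Blachman--Stam inequality
\[
I(w) \le \sum_{j=1}^n \lambda_j^2\, I(u_j) = \kappa^{-2}\sum_{j=1}^n \kappa_j^2\, I(u_j).
\]
Multiplying by $\kappa$ and substituting $\gamma_j = \kappa_j/\kappa$ yields $\kappa I(w) \le \sum_j \gamma_j \kappa_j I(u_j)$, hence $\Phi'(t) \le 0$. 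The equality clause of Lemma \ref{blac} identifies the extremizers as Gaussians with variances proportional to $\kappa_j$, matching the equality statement of the theorem.

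For the limit, dilation invariance lets me evaluate $\Phi(t)$ on the self-similar rescalings $U_j(x,t)$ defined in \fer{FPscal}. By the central limit property \fer{limi2}, $U_j(\cdot, t) \to M_{\kappa_j}$ and, correspondingly, the rescaled version of $w$ tends to $M_\kappa$ (since each $f_j$ is a probability density, so $\mu_j = 1$ and $w$ has unit mass). Using $H(M_\sigma) = \frac{d}{2}\log(2\pi e\, \sigma)$, the $\log(2\pi e)$ factor cancels because $\sum_j \gamma_j = 1$, leaving
\[
\lim_{t\to\infty}\Phi(t) = \frac{d}{2}\log\kappa - \frac{d}{2}\sum_{j=1}^n \gamma_j \log \kappa_j = -\frac{d}{2}\sum_{j=1}^n \gamma_j \log \gamma_j,
\]
which is the claimed value \fer{lim16}. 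The main obstacle I anticipate is the rigorous application of Lemma \ref{blac} at the boundary parameters $r = p_j = 1$ and the verification that de Bruijn's identity may be used pointwise in $t$; both can be handled by the smooth, fast-decay density approximation already used in the proof of Lemma \ref{le-young}, followed by passage to the limit on the monotone quantity $\Phi$.
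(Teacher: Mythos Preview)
Your proposal is correct and follows essentially the same route as the paper: dilation invariance of $\Phi$, de Bruijn's identity to express $\Phi'(t)$ in terms of Fisher informations, the Blachman--Stam bound obtained from Lemma \ref{blac} at $r=p_j=1$ with $\lambda_j=\kappa_j/\kappa$ (cross terms vanishing by Fubini exactly as you say), and the central limit property to evaluate the large-time limit on Gaussians. Your treatment is in fact slightly more explicit than the paper's in naming de Bruijn's identity and in carrying the factor $e$ in $H(M_\sigma)=\tfrac{d}{2}\log(2\pi e\,\sigma)$, which cancels just as you observe.
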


Theorem \ref{th-ent} shows the monotonicity of a dilation invariant
functional linked to the Shannon's entropy of a $n$-th convolution
of probability density functions. A direct consequence of this
monotonicity is the entropy power inequality. Indeed, the
monotonicity of $\Phi(t)$ implies that, for any choice of the
constants $\gamma_j$, with $\sum_{j=1}^n \gamma_j = 1$
 \be\label{gen1}
H(f_1*f_2*\cdots *f_n) \ge \sum_{j=1}^n \gamma_j H(u_j(t)) - \frac d2 \sum_{j=1}^n \gamma_j \log \gamma_j.
 \ee
Inequality \fer{gen1} generalizes to $n$ functions the result of
Lemma \ref{loc2}. Shannon's entropy power inequality then follows by
maximizing the right-hand side of \fer{gen1} over the sequence
$\gamma_j$.

\section{Conclusions}

In this paper we studied the monotonicity properties of various
functionals related to convolutions of powers of solutions to the
heat equation. This monotonicity is at the basis of a  new proof of
many well-known inequalities in sharp form, which are viewed in our
picture as consequence of a unique well understandable physical
principle, in the form of time monotonicity of a Lyapunov
functional. Partial results of this strategy were presented in
\cite{Tos1,Tos2,Tos3}.

This idea has been applied here to prove classical Young's
inequality and its converse, Brascamp--Lieb type inequalities,
Babenko's inequality and Pr\'ekopa--Leindler inequality. In
addition, a new direct proof of Shannon's entropy power inequality
is shown to follow by the same argument.

Unlike similar results obtained in recent years (cf. \cite{Ba,BC,
BH, B2, BB,B1,BBC,B1, Bor,Car}), we were inspired by some relatively
old papers by people working on information theory \cite{Bla,Sta}
and kinetic theory of rarefied gases \cite{McK}, mainly connected
with classical Shannon's entropy and its monotonicity properties.



\end{document}